\newtheorem*{rep@theorem}{\rep@title}
\newcommand{\newreptheorem}[2]{%
	\newenvironment{rep#1}[1]{%
		\def\rep@title{#2 \ref{##1}}%
		\begin{rep@theorem}}%
		{\end{rep@theorem}}}
\newcommand{\mydriver}{hypertex}
\renewcommand{\mydriver}{pdftex}
\setlist[enumerate]{itemsep=3pt,topsep=3pt}
\newcommand{\Thr}{\ensuremath{T}}
\newcommand{\poly}{\textrm{poly}}
\newcommand{\PP}{\mathcal{P}}
\newcommand{\Add}{\textrm{Add}}
\newcommand{\Sum}{\textrm{Sum}}
\newcommand{\nis}{\ensuremath{\mathrm{nis}}}
\newcommand{\nscc}{\ensuremath{\mathrm{nscc}}}
\newcommand{\cnt}{\ensuremath{\mathrm{cnt}}}
\newcommand{\Insert}{\ensuremath{\mathbf{Insert}}}
\newcommand{\Delete}{\ensuremath{\mathbf{Delete}}}
\newcommand{\Query}{\ensuremath{\mathbf{Query}}}
\theoremstyle{plain}
\newtheorem{theorem}{Theorem}[section]%
\newtheorem{lemma}[theorem]{Lemma}
\newtheorem{corollary}[theorem]{Corollary}
\newtheorem{claim}[theorem]{Claim}
\theoremstyle{definition}
\newcommand{\ncc}{\ensuremath{\mathrm{ncc}}}
\newcommand{\err}{{\textsc{Err}}}
\newcommand{\junk}[1]{{}}
\newcommand{\pnew}[1]{#1}
\providecommand{\abs}[1]{\lvert#1\rvert}
\title{Constant-Time Dynamic Weight Approximation for Minimum Spanning Forest}
\author{
	Monika Henzinger\footnote{University of Vienna, Faculty of Computer Science, Vienna, Austria. E-mail: \texttt{monika.henzinger@univie.ac.at}. The research leading to these results has received funding from the European
		Research Council under the European Union's Seventh Framework Programme
		(FP/2007-2013) / ERC Grant Agreement no. 340506.}
	\and 
	Pan Peng\footnote{School of Computer Science and Technology, University of Science and Technology of China, China. Email: \texttt{ppeng@ustc.edu.cn}.}
}
\date{}
\begin{document}
	\begin{titlepage}
		\maketitle
		\thispagestyle{empty}

		\begin{abstract}
We give two fully dynamic algorithms that maintain a $(1+\varepsilon)$-approximation of the weight $M$ of a minimum spanning forest (MSF) of an $n$-node graph $G$ with edges weights in $[1,W]$, for any $\varepsilon>0$. 

(1) Our \emph{deterministic} algorithm takes $O({W^2 \log W}/{\varepsilon^3})$ {\em worst-case} update time, which is $O(1)$ if both $W$ and $\varepsilon$ are constants. Note that there is a lower bound by Patrascu and Demaine (SIAM J. Comput. 2006) which shows that it takes $\Omega(\log n)$ time per operation to maintain the {\em exact}
weight of an MSF that holds even in the unweighted case, i.e.~for $W=1$. We further show that any deterministic data structure that dynamically maintains the $(1+\varepsilon)$-approximate weight of an MSF requires super constant time per operation, if $W\geq (\log n)^{\omega_n(1)}$.

(2) Our randomized (Monte-Carlo style) algorithm works with high probability and runs in \emph{worst-case} \pnew{$O(\log  W/ \varepsilon^{4})$} update time if $W= O({(m^*)^{1/6}}/{\log^{2/3} n})$, where $m^*$ is the minimum number of edges in the graph throughout all the updates. It works even against an adaptive adversary. \pnew{This implies a randomized algorithm with worst-case $o(\log n)$ update time, whenever $W=\min\{O((m^*)^{1/6}/\log^{2/3} n), 2^{o({\log n})}\}$ and $\varepsilon$ is constant}. We complement this result by showing that for any constant  $\varepsilon,\alpha>0$ and $W=n^{\alpha}$, any (randomized) data structure that dynamically maintains the weight of an MSF of a graph $G$ with edge weights in $[1,W]$ and $W = \Omega(\varepsilon m^*)$ within a multiplicative factor of $(1+\varepsilon)$ takes $\Omega(\log n)$ time per operation. 
		\end{abstract}
	\end{titlepage}
\section{Introduction}
Minimum spanning forest (MSF) is a fundamental and well studied graph problem in computer science. Given an edge-weighted graph $G$, an {MSF} is a subgraph of $G$ that forms a spanning forest 
and has minimum weight among all spanning forests of $G$, where the weight of a spanning forest is the sum of the edge weights of the forest. %
In this paper, we study dynamic algorithms for maintaining and extracting information regarding the MSF of a dynamically changing graph. %

A (fully) dynamic graph algorithm is a data structure that provides information about a graph property while the graph is being modified by {\em edge updates} such as edge insertions or deletions. When designing a dynamic graph algorithm the goal is to minimize the time per update or query operation. Formally, a \emph{fully dynamic} graph algorithm is an algorithm that maintains some information of a graph $G$ which is undergoing an arbitrary sequence of the following operations: 1) \Insert($u,v, w$): insert the edge $(u,v)$ with weight $w$ in $G$; 2) \Delete($u,v$): delete the edge $(u,v)$ from $G$. (If the considered graph is unweighted, then the weight $w$ of an insertion is always set to be $1$.) The algorithm can further support different \Query~operations depending on the specific graph property. Dynamically maintaining the (minimum) spanning forest (or tree) (e.g., \cite{Fred83,EppsteinGIN97,HenzingerK99,HK97:mst,HLT01:connectivity,HRW15:MST,wulff2017fully,NS17dynamic,NSW17:MST}) and dynamically testing connectivity between any pair of vertices (e.g., \cite{EppsteinGIN97,HLT01:connectivity,HenzingerK99,Thorup00,KapronKM13,GKKT15:connectivity,HuangHKP17}) have played a fundamental role in the area of dynamic graph algorithms. 
The currently best dynamic algorithms for MSF take $O(\log^4 n/ \log \log n)$ {\em amortized}\footnote{An algorithm is said to have amortized update time of $\alpha$ if for any $t$, after $t$ updates the total update time is at most $\alpha t$. It is said to have \emph{worst-case update time} of $\alpha$ if every update time is at most $\alpha$.}  time per update~\cite{HRW15:MST} (that improves upon \cite{HLT01:connectivity}) and $O(n^{o(1)})$ {\em worst-case} time~\cite{NSW17:MST,chuzhoy2019deterministic}.

An approximate version of this problem has also been studied:
A $(1+\varepsilon)$-approximation\footnote{Note that we can easily translate this definition of $(1+\varepsilon)$-approximation of $M$ to the standard one, i.e., a value $M''$ such that $M\leq M''\leq (1+\varepsilon) M$. This can be achieved by multiplying the estimate $M'$ by $1/(1-\varepsilon)$ and replacing $\varepsilon$ by $\varepsilon'=\min\{\varepsilon/4,\frac{1}{2}\}$.} of the weight $M$ of an MSF is a value $M'$ such that $(1-\varepsilon) \cdot M \le M' \le (1 +\varepsilon) \cdot M$. 
In the $(1+\varepsilon)$-approximate MSF problem, the fully dynamic algorithm maintains an MSF of weight $M'$ that is
a $(1+\varepsilon)$ approximation of the weight of an MSF in the current graph. This problem was first studied by Henzinger and King \cite{HenzingerK99}, who also provided a reduction that uses a dynamic algorithm $\cal{B}$ for a spanning tree to obtain a dynamic algorithm $\cal{A}$ for the  $(1+\varepsilon)$-approximate MSF problem, by invoking the algorithm $\cal{B}$ on $\log W/\varepsilon$ subgraphs of a graph $G$ with edge weights in $[1,W]$, for some $W\geq 1$. %
By combining this reduction and best known algorithms for dynamic connectivity (and spanning forest), we have the following table Table \ref{table:known} summarizing the state-of-the-art. We remark that all these algorithms require at least logarithmic update time per operation. 

\begin{table}[h]
	\centering
	\begin{tabular}{c |c p{0.4\textwidth}}
		\hline
		& worst-case & amortized \\
		\hline
		deterministic & $n^{o(1)}$ \cite{chuzhoy2019deterministic}& $\min\{O(\frac{\log W\log^2 n}{\varepsilon\log\log n}), O(\frac{\log^4 n}{\log\log n})\}$ \cite{wulff2013faster,HenzingerK99} \cite{HRW15:MST}\\
		\hline
		Monte Carlo  & $O(\log W\log^4n/\varepsilon)$ \cite{GKKT15:connectivity,HenzingerK99} & \\
		\hline
		Las Vegas& & expected $O(\log W(\log n) (\log\log n)^2/\varepsilon)$ \cite{HuangHKP17,HenzingerK99} \\
		\hline
	\end{tabular}
	\caption{The best known dynamic algorithms for the $(1+\varepsilon)$-approximate MSF problem on a graph with edge weights from $[1,W]$. Those algorithms maintain a spanning tree whose weight is a $(1+\varepsilon)$-approximation of the weight of an MSF. %
		For the Monte Carlo algorithm, there is some $1/\poly(n)$ probability of  answering a query incorrectly. For the Las Vegas algorithm, the answers to the queries are always correct, while the update time is a random variable. } 
	\label{table:known}
\end{table}

\begin{table}[h]
	\centering
	\begin{tabular}{c| c }
		\hline
		& worst-case  \\
		\hline
		deterministic & $O({W^2\cdot (\log W)}\cdot{\varepsilon^{-3}})$ Theorem \ref{thm:deterministic_WMST} \\
		\hline
		Monte Carlo & $O\left(\frac{\log W}{\varepsilon}+\frac{W^3(\log W)\log({W}/{\varepsilon} )\left(\log({\log W}/\varepsilon) +\log n\right)} {\varepsilon^{4}\sqrt{m^*} }\right)$ 
		Theorem \ref{thm:random_WMST}  \\
		\hline
	\end{tabular}
	\caption{Our algorithms for the problem of maintaining a $(1+\varepsilon)$-approximation of the weight of an MSF of a graph with edge weights from $[1,W]$. Our algorithms do \emph{not} maintain a spanning tree. Note that our algorithms are faster than the corresponding algorithms in Table \ref{table:known} for a wide range of parameters. 
	}
	\label{table:ours}
\end{table}

\junk{
	The fastest randomized \emph{amortized} running time
	for this problem
	is $O(\log^2 n (\log \log n)^2/\varepsilon)$ expected amortized update time, which follows from
	\cite{HenzingerK99} and~\cite{HuangHKP17}.
	The fastest randomized \emph{worst-case} running time is
	$O(\log^4n/\varepsilon)$ {\em worst-case} update time, which follows from~\cite{HenzingerK99} and~\cite{GKKT15:connectivity}.  
	The resulting randomized algorithm only works against an oblivious adversary.
}

It is natural to consider a relaxed version of the above problems, by asking how to dynamically maintain an exact or approximate \emph{value}, i.e., not the spanning tree, of the weight of an MSF with much faster update time. %
We first note that maintaining the \emph{exact} value of the weight of an MSF dynamically cannot be done in constant time per operation: There is a lower bound of Patrascu and Demaine~\cite{PatrascuD06}, who
showed that in the cell-probe model many fundamental graph properties, such as asking whether the graph is connected or maintaining the \emph{exact} value of the weight of an MSF, require $\Omega(\log n)$ time per operation, where $n$ is the number of nodes in the graph. Their lower bound even holds for the unweighted case, i.e. for $W=1$.
However, their lower bound does not apply to maintaining an \emph{approximate value} of the weight of an MSF, leading to the following open question: \emph{Can a $(1+\epsilon)$-approximation of the
	weight of an MSF be maintained in time $o(\log n)$?}

We answer this question positively in this paper.
Our main contributions are two dynamic algorithms for maintaining a $(1+\epsilon)$-approximation of the weight of an MSF, that bypass the $\Omega(\log n)$ time lower bound barrier. Even stronger, they have \emph{constant update times} for a wide range of parameters. 
More specifically, our first algorithmic result is as follows. 

\begin{restatable}{theorem}{deterWMSF}~\label{thm:deterministic_WMST}
	There exists a fully dynamic, deterministic algorithm that maintains an estimator $\overline{M}$ that $(1+\varepsilon)$-approximates the weight $M$ of a MSF of a graph with edge weights from $[1,W]$. The worst-case time per update operation is $O({W^2\cdot \log W}\cdot {\varepsilon^{-3}})$. 
\end{restatable}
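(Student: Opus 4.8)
The plan rests on the classical identity expressing the MSF weight through component counts. For an integer $i\ge 0$, let $G_{\le i}$ be the subgraph of the edges of weight at most $i$, let $c_i$ be its number of connected components on all $n$ vertices, and set $c:=c_W$, the number of components of $G$. Since the MSF edges of weight $\le i$ form a spanning forest of $G_{\le i}$, the number of MSF edges of weight $>i$ is exactly $c_i-c$, and the layer-cake decomposition of the edge weights gives $M=\sum_{i=0}^{W-1}(c_i-c)$. To avoid a factor $\Theta(W)$ in the running time I would first round every weight up to the nearest power of $1+\Theta(\varepsilon)$; this changes $M$ by at most a $1+\Theta(\varepsilon)$ factor and leaves only $L=O(\varepsilon^{-1}\log W)$ distinct values $0=t_0<t_1<\dots<t_L\le W$, turning the identity into a non-negative combination $M'=\sum_{\ell=0}^{L-1}(t_{\ell+1}-t_\ell)(c_{t_\ell}-c)$ whose coefficients sum to at most $W$. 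Hence it suffices to maintain, for each of the $L$ thresholds $t$, a sufficiently accurate estimate $\tilde c_t$ of $c_t$ and report the corresponding combination (up to the standard rescaling of $\varepsilon$).

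The slack available is what makes a $\poly(W/\varepsilon)$-time counter good enough. Let $N$ be the number of non-isolated vertices of $G$; isolated vertices contribute equally to every $c_{t_\ell}$ and to $c$ and so cancel out of $M'$, while the MSF has $n-c=N-c'\ge N/2$ edges (where $c'\le N/2$ is the number of components among the non-isolated vertices) and every weight is $\ge1$, so $M\ge N/2$. Thus if each $\tilde c_t$ satisfies $0\le c_t-\tilde c_t\le N/s$ for a parameter $s=\Theta(W/\varepsilon)$, the reported value differs from $M'$ by at most $(N/s)\sum_\ell(t_{\ell+1}-t_\ell)\le WN/s=O(\varepsilon N)=O(\varepsilon M)$, which after rescaling $\varepsilon$ yields a $(1\pm\varepsilon)$-estimator. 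The whole task reduces to an \emph{approximate connected-components counter} per threshold, with additive error $N/s$ and worst-case update time $\poly(W/\varepsilon)$.

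For that counter I would, at each threshold $t$, maintain \emph{exactly} every connected component of $G_{\le t}$ whose size lies strictly between $1$ and $s$ (as explicit vertex sets, with a pointer from each vertex to its component), count the vertices isolated in $G_{\le t}$ separately from degree information, and lump all components of size $\ge s$ into a single ``blob''; then $\tilde c_t$ is the number of isolated-in-$G_{\le t}$ vertices, plus the number of tracked components, plus one if the blob is non-empty. Since a component of size $\ge s$ uses $\ge s$ of the $N$ non-isolated vertices of $G$, there are at most $N/s$ of them, so $0\le c_t-\tilde c_t\le N/s$ as long as no genuine component of size in $[2,s)$ is ever missed. For an edge insertion this is easy: look up the components of the two endpoints in $O(1)$, merge them, and demote the result to the blob if it reaches size $s$, all in $O(s)$ time per affected threshold. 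For a deletion, if the edge lay inside a tracked small component one re-splits it by a BFS confined to $<s$ vertices; if it lay in the blob, one runs, from each endpoint, a BFS truncated at $s$ visited vertices, starting to track any endpoint whose BFS halts (its component has become small) and otherwise leaving it in the blob. A truncated BFS inside a set of $<s$ vertices touches $O(s^2)$ edges, so the cost is $O(s^2)=O(W^2/\varepsilon^2)$ per threshold and $O(L\cdot s^2)=O(W^2\varepsilon^{-3}\log W)$ in total, matching the claimed bound; $\overline M$ itself changes by $O(1)$ per threshold and is maintained incrementally.

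The main obstacle is to prove that this blob maintenance preserves the invariant $0\le c_t-\tilde c_t\le N/s$ — in particular that declaring ``both endpoints stay in the blob'' whenever both truncated BFS's overflow is always safe, so that no genuine small component is ever lost. The key point is that deleting one edge splits a component into at most two pieces lying on the two sides of that edge, so any piece of size $<s$ carved out of the blob contains one of the deleted edge's endpoints and is therefore discovered by the size-$s$ BFS from that endpoint; an overflowing BFS can only miss the event that the blob split into two pieces that are \emph{both} of size $\ge s$, which merely leaves $\tilde c_t$ one short of counting one more large component — an error already inside the $N/s$ slack, since $c_t-\tilde c_t$ equals the number of large components minus one. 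Alongside this one must handle routine bookkeeping: detecting when the blob becomes empty, promoting a small component to the blob when an insertion pushes it to size $s$, and keeping at each threshold the adjacency information restricted to that threshold (or edges bucketed by level) so that each bounded BFS indeed costs $O(s^2)$ rather than scanning full adjacency lists.
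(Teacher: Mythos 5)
Your proposal uses essentially the same approach as the paper: the same Chazelle--Rubinfeld--Trevisan-type identity extended to disconnected graphs over $O(\varepsilon^{-1}\log W)$ geometric weight thresholds, the same $M\ge\nis(G)/2$ observation to turn an additive error into a relative one, and the same reduction to maintaining per threshold a count of ``small'' (size $<\Theta(W/\varepsilon)$) components via truncated BFS around the endpoints of each updated edge, with a matching $O(W^2\log W/\varepsilon^3)$ bound. The paper's implementation is slightly leaner than yours---it keeps only a scalar counter for the number of small components, updated by a case analysis on the sizes returned by three size-bounded BFS calls (before/after the update), rather than explicit component vertex sets, per-vertex pointers, and a blob-size counter---but this is an implementation detail, not a different argument.
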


For constant $W$ and $\varepsilon$ this is a \emph{deterministic worst-case} $O(1)$ time bound. Our algorithm does not require the initial graph to be empty. In contrast, for maintaining the exact value of the weight of an MSF, the best known deterministic algorithm has $O(n^{o(1)})$ worst-case time \cite{chuzhoy2019deterministic} (which actually maintains the MSF, not just the value). 
{In comparison,} the already mentioned lower bound of $\Omega(\log n)$ by~\cite{PatrascuD06} also applies for maintaining the {\em exact} weight of an MSF even for $W=1$. %

We then show that any deterministic data structure that dynamically maintains the $(1+\varepsilon)$-approximate weight of an MSF requires super constant time per operation, if $W\geq (\log n)^{\omega_n(1)}$, where $\omega_n(1)$ is any function that goes to infinity if $n$ goes to infinity. The lower bound on the time per operation for this problem is established in the cell-probe model. %
\begin{restatable}{theorem}{thmdeterlower}\label{thm:deterministiclowerbound}
	Let $W\geq (\log n)^{\omega_n(1)}$. Let $G$ be a dynamic graph with edge weights in $[1,W]$. Any deterministic data structure that dynamically maintains the weight $M$ of an MSF of a graph $G$ within a multiplicative factor of  $(1+\varepsilon)$ for any constant $\varepsilon>0$, or an additive error less than $W/2$ must perform $\omega_n(1)$ cell probes, where each cell has size $O(\log n)$.
\end{restatable}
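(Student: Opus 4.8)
The plan is to transfer the $\Omega(\log\cdot)$ cell-probe lower bounds of Patrascu and Demaine~\cite{PatrascuD06} to our approximate problem by reduction, but on host instances of size only $\Theta(W)$ rather than $\Theta(n)$; since the cells have size $O(\log n)$, this is precisely the regime in which the hypothesis $W\ge(\log n)^{\omega_n(1)}$ makes the transferred bound super-constant. Fix a deterministic data structure $\mathcal D$ that, on an $n$-vertex dynamic graph with weights in $[1,W]$, maintains an estimate $\overline M$ of the MSF weight $M$ that is either within a multiplicative factor $1+\varepsilon$ for a fixed constant $\varepsilon\in(0,1)$ or within additive error less than $W/2$, spending at most $t$ cell probes per operation with words of $O(\log n)$ bits; we show $t=\omega_n(1)$. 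We use two facts from~\cite{PatrascuD06}: with words of $w$ bits, maintaining the exact MSF weight of an \emph{unweighted} $N$-vertex dynamic graph, and answering dynamic connectivity queries (is a given pair of vertices connected?) on an $N$-vertex dynamic graph, each require $\Omega\!\big(\log N/\log(2+w/\log N)\big)$ probes per operation. Taking $N=\Theta(W)$ and $w=\Theta(\log n)$ this equals $\Omega(\log W/\log\log n)$, which tends to infinity exactly when $W\ge(\log n)^{\omega_n(1)}$.

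\textbf{Additive guarantee.} Let $\mathcal I$ be a dynamic unweighted graph on $N:=\min(\lceil W\rceil,n)$ vertices that is hard for maintaining the exact MSF weight. Simulate $\mathcal I$ with $\mathcal D$: use $N$ of the $n$ vertices, keep the other $n-N$ permanently isolated (so they never affect the MSF), and give \emph{every} edge of $\mathcal I$ weight exactly $W$. Since all edge weights are equal, every spanning forest is an MSF, so $M=W\cdot(N-\mathrm{cc}(\mathcal I))$ is exactly $W$ times the MSF weight of $\mathcal I$. As $|\overline M-M|<W/2$, rounding $\overline M/W$ to the nearest integer recovers $N-\mathrm{cc}(\mathcal I)$, i.e.\ the exact MSF weight of $\mathcal I$, with $O(1)$ extra work per operation. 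Hence $t=\Omega(\log N/\log(2+w/\log N))=\omega_n(1)$.

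\textbf{Multiplicative guarantee.} A single weight-$W$ edge changes $M$ by at most $W$, comparable to the approximation slack, so we do not try to detect such a change directly; instead we arrange that the two possible answers to a connectivity query put $M$ into two fixed, well-separated ranges that a single thresholded query can tell apart. Let $\mathcal I$ be a dynamic graph on $N:=\min(\lfloor W/c_\varepsilon\rfloor,n)$ vertices that is hard for dynamic connectivity, where $c_\varepsilon$ is a constant with $c_\varepsilon>(1+\varepsilon)/(1-\varepsilon)$. Simulate $\mathcal I$ with $\mathcal D$ using unit-weight edges; to answer ``are $s,t$ connected in $\mathcal I$?'', insert the edge $(s,t)$ of weight $W$, read $\overline M$, and delete it. Then $M=(N-\mathrm{cc}(\mathcal I))+W\cdot[\,s\not\sim t\text{ in }\mathcal I\,]$, which is $<N\le W/c_\varepsilon$ when $s\sim t$ and $\ge W$ when $s\not\sim t$; a $(1+\varepsilon)$-approximation then keeps $\overline M<(1+\varepsilon)W/c_\varepsilon$ in the first case and $\overline M>(1-\varepsilon)W$ in the second, two disjoint intervals, so a single comparison of $\overline M$ against a fixed threshold decides the query, again with $O(1)$ overhead per operation. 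Hence $t=\Omega(\log N/\log(2+w/\log N))=\omega_n(1)$.

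The step I expect to require the most care is nailing down the word-size dependence of the Patrascu--Demaine bound — the $\log(2+w/\log N)$ denominator — and verifying that at $N=\Theta(W)$ and word size $\Theta(\log n)$ it degrades to $\omega_n(1)$ exactly under $W\ge(\log n)^{\omega_n(1)}$ and no faster; one must also check the routine bookkeeping that each reduction costs only $O(1)$ operations of $\mathcal D$ per operation of the hard instance (so both the amortized and worst-case forms of the lower bound carry over), that the padding by isolated vertices and the insert/query/delete wrapping around queries introduce no other interference, and that $N\le n$ holds in every parameter range (when $W>n$ one simply takes $N=n$, in which case $\log N=\Theta(\log n)$ and the bound only gets stronger).
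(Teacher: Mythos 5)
Your proof is correct and uses essentially the same strategy as the paper: embed a hard dynamic-graph instance on only $\Theta(W/\varepsilon)$ or $\Theta(W)$ vertices into the $n$-vertex graph by padding with isolated vertices, reduce connectivity queries to approximate MSF-weight queries via a single weight-$W$ edge whose inclusion in the MSF shifts $M$ by $W$, and then observe that the cell-probe bound for the small host instance, evaluated at word size $\Theta(\log n)$, is super-constant precisely when $W\geq(\log n)^{\omega_n(1)}$. The only substantive difference is the source of hardness. The paper routes through the parity prefix sum problem and the Fredman--Saks bound $\Omega(\log n'/(\log\log n'+\log b))$, using the Henzinger--Fredman two-path encoding (this is its Theorem~\ref{thm:HFlower_2}, which handles the additive and multiplicative guarantees simultaneously with a single choice $n'=\lfloor W/(3\varepsilon)\rfloor$), whereas you invoke the Patrascu--Demaine grid construction directly and split the argument into an additive branch (scaling all edge weights to $W$ so that rounding $\overline M/W$ recovers the exact MSF weight of the unweighted host instance) and a multiplicative branch (the $c_\varepsilon>(1+\varepsilon)/(1-\varepsilon)$ threshold test). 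Since the paper already uses the Patrascu--Demaine reduction verbatim for the randomized bound (Theorem~\ref{thm:lower}), your choice is arguably the more unified one, though it yields exactly the same final asymptotic $\omega_n(1)$; your additive branch even gives the stronger $\Omega(\log n)$ outright since that reduction does not require shrinking the host instance. Your worry about ``nailing down the word-size dependence'' is well placed but resolves favorably: with $N=\Theta(W)$ and $w=\Theta(\log n)$, the denominator $\log(2+w/\log N)$ is $O(\log\log n)$, so $\Omega(\log W/\log\log n)=\omega_n(1)$ exactly when $\log W=\omega(\log\log n)$, i.e.\ $W\geq(\log n)^{\omega_n(1)}$, matching both the hypothesis and the bound the paper obtains from the Fredman--Saks form.
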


Our second algorithmic result is a randomized dynamic algorithm that works even against an adaptive adversary\footnote{That is, the adversary sees the answers to all query operations before deciding which edge to update next. We say an adversary is \emph{oblivious}, if it fixes the sequence of edge insertions and deletions and is oblivious to the randomness of algorithm.}. We let $m^*$ be the minimum number of edges of the graph throughout all the updates. In this paper, {\em ``With high probability''} means ``with probability at least $1-\frac{1}{n^c}$ for some constant $c \ge 1$''.
\begin{restatable}{theorem}{randomWMSF}~\label{thm:random_WMST}
	Let $\varepsilon\in(0,1)$ and $W\geq 1$. There exists a fully dynamic algorithm that with high probability, maintains an estimator $\overline{M}$ that is a $(1+\varepsilon)$-approximation of the weight $M$ of MSF of a graph $G$ with edge weights from $[1,W]$. The worst-case time per update operation is \[\pnew{O\left((\log W)/\varepsilon+{W^3(\log W)\log({W}/{\varepsilon} )\left(\log({\log W}/\varepsilon) +\log n\right)}/({\varepsilon^4 \sqrt{m^*}})\right)},\] where $m^*\geq 1$ is the minimum number of edges in $G$ throughout all the updates. Our algorithm works against an adaptive adversary.
\end{restatable}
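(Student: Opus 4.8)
The plan is to reduce the problem to dynamically estimating the number of connected components of a short list of threshold subgraphs, and then to maintain those estimates by sublinear-time sampling combined with periodically re-randomized rebuilds. First I would set up the weight formula. Round every edge weight up to the nearest power of $1+\varepsilon$; this changes the weight of an MSF by at most a factor $1+\varepsilon$, and leaves only $L=O(\varepsilon^{-1}\log W)$ distinct weights $1=a_0<a_1<\dots<a_L\le W$. Writing $c_t$ for the number of connected components of the subgraph $G_{\le t}$ of edges of (rounded) weight at most $t$, and $c_\infty:=c_{a_L}$ for the number of connected components of $G$, the MSF weight of the rounded graph equals \[(n-c_\infty)+\sum_{k=1}^{L}(a_k-a_{k-1})\,(c_{a_{k-1}}-c_\infty),\] a sum of $L+1$ nonnegative terms whose coefficients sum to at most $W$. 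Hence it suffices to estimate $n-c_\infty$ and each difference $c_{a_{k-1}}-c_\infty$, splitting an $O(\varepsilon)$-fraction relative error budget across the $L+1$ terms; the final estimator $\overline M$ is the corresponding weighted combination.

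Each of these quantities is estimated with a Chazelle--Rubinfeld--Trevisan-style estimator run on the appropriate threshold subgraph: sample vertices --- crucially, uniformly among the \emph{non-isolated} vertices of $G$, a set of size at most $2m$ that I maintain explicitly --- and, for each sampled $v$, run a BFS in $G_{\le a_{k-1}}$ (resp.\ in $G$) truncated as soon as it has discovered $\tau$ vertices, using that a vertex of degree at least $\tau$ already certifies a component of size $>\tau$, so that every truncated exploration costs only $\poly(W/\varepsilon)$. The structural fact that lets the truncation threshold $\tau$ and the sample size depend only on $W,\varepsilon$ (and $\log n$), not on $n$ or the current edge count, is the inequality $M\ge n-c_\infty\ge\sqrt{m^*/2}$: one cannot pack $m^*$ edges into components spanning fewer than $\Theta(\sqrt{m^*})$ non-isolated vertices, and each non-isolated vertex contributes at least $1/2$ to $n-c_\infty$ (so, for the $n-c_\infty$ term, the per-sample mean is already $\ge 1/2$). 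Consequently an additive error $O(\varepsilon\sqrt{m^*}/W)$ per term is enough, which --- restricting attention to non-isolated vertices --- I would argue is achieved, with high probability, by $\tau=\poly(W/\varepsilon)$ and a sample of size $\poly(W/\varepsilon)\cdot\log n$.

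To obtain \emph{worst-case} update time and to defeat an adaptive adversary, the estimators are not updated edge by edge. I maintain the bucketed adjacency lists, the set of non-isolated vertices, and per-threshold degree counters in $O(L)=O(\varepsilon^{-1}\log W)$ worst-case time per update, and every $\Phi=\Theta(\varepsilon\sqrt{m^*}/W)$ updates I recompute all $L+1$ estimators from scratch using a \emph{fresh} random sample, spreading the $O(L\cdot(\text{sample size})\cdot\tau^{O(1)})$ rebuild cost evenly over the next block of $\Phi$ updates by the standard global-rebuilding (two-copy) deamortization. Between rebuilds the reported value is frozen, and it remains a $(1+\varepsilon)$-approximation because one update changes $M$ by at most $W$, so $\Phi$ updates change it by at most $\Phi W=O(\varepsilon\sqrt{m^*})=O(\varepsilon M)$. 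The adaptive adversary is handled because, at the instant a block's fresh sample is drawn, the graph is a fixed object independent of that sample, so the fresh estimator is correct with high probability; the adversary's later choices cannot exploit a sample it has not seen, and the only quantities reported during the block are deterministic functions of that already-fixed estimate. Dividing the $\poly(W/\varepsilon)\log n$ rebuild cost by $\Phi$ gives the $W^3(\log W)\cdots/(\varepsilon^4\sqrt{m^*})$ term, while the data-structure maintenance gives the $O((\log W)/\varepsilon)$ term.

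The main obstacle is the error analysis of the estimator: showing that a single truncation threshold $\tau$ and a single sample size --- polynomial in $W,1/\varepsilon,\log n$ and independent of $n$ and of the current number of edges --- simultaneously control (i) the truncation bias of every $c_{a_{k-1}}-c_\infty$ and (ii) the sampling deviation with high probability, so that after multiplication by the gaps $a_k-a_{k-1}$ and summation the total error stays below $\varepsilon M$, all while keeping the per-sample exploration cost at $\poly(W/\varepsilon)$. This is exactly where $M\ge\sqrt{m^*/2}$, the reduction to non-isolated vertices, and a careful allocation of the error budget over the $L+1$ terms must be combined (in particular, handling terms $c_{a_{k-1}}-c_\infty$ that are too small to estimate with relative accuracy, for which the additive slack $O(\varepsilon\sqrt{m^*}/W)$ must suffice). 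A secondary technical point is the bookkeeping of the deamortization: the estimator being built must be based on the graph state at the start of the block that will use it, even though the intervening block's updates are not yet known when the build begins --- resolved by building on the graph state one block earlier and absorbing the extra $\Phi$ updates into the $O(\varepsilon M)$ staleness slack (adjusting $\Phi$ by a constant factor).
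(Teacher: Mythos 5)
Your proposal follows essentially the same route as the paper: reduce to estimating CC counts of threshold subgraphs via the Chazelle--Rubinfeld--Trevisan-type formula, estimate those counts by truncated BFS from sampled non-isolated vertices with fresh randomness, and deamortize by global rebuilding. One place your framing is actually cleaner: by recomputing all $L+1$ estimators on a single shared schedule, you sidestep the paper's ``self-loop'' synchronization device, whose only purpose is to keep each per-subgraph $T$-parameter changing by at most $2$ per update to $G^{(j)}$ so that Theorem~\ref{thm:random_NCC}'s precondition holds. However, there are two related gaps.

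First, you set $\Phi=\Theta(\varepsilon\sqrt{m^*}/W)$, but $m^*$ is defined as the minimum edge count over the \emph{entire} update sequence, so it is not available to the algorithm when a phase begins. The paper resolves this by having the rebuild period depend on the \emph{current} state: at the end of each phase it reads $T_i=\nis(G)$, stores it as $\Psi$, and runs the next phase for $\Theta(\varepsilon'\Psi)$ updates; it then argues $\Psi$ cannot drift by more than an $O(\varepsilon')$-fraction within a phase, and only at the very end is $\Psi\ge\nis^*>\sqrt{2m^*}$ used to translate the bound into terms of $m^*$. You need to make the same move: set $\Phi$ from the current $\nis(G)$, and bring in $\sqrt{m^*}$ only when stating the final running time.

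Second, and for the same reason, the claim that a sample of size $\poly(W/\varepsilon)\log n$ from the non-isolated vertices achieves per-term additive error $O(\varepsilon\sqrt{m^*}/W)$ is false in general: sampling $s$ of the $\nis(G)$ non-isolated vertices gives error $\Theta(\nis(G)/\sqrt{s})$, which is $\gg\sqrt{m^*}/\poly(W/\varepsilon)$ whenever $\nis(G)\gg\sqrt{m^*}$ (e.g.\ a near-perfect matching). What \emph{is} achievable is additive error $O(\varepsilon\,\nis(G)/W)$ per term, and this is also what suffices, because the sufficiency argument should go through $M\ge\nis(G)/2$ (each non-isolated vertex touches an MSF edge), not through $M\ge\sqrt{m^*/2}$; the latter inequality is only needed at the end to turn $\nis^*$ into $\sqrt{m^*}$. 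Finally, a small but necessary implementation detail you glossed over: sampling uniformly from the non-isolated vertices in worst-case $O(1)$ time per update requires a concrete structure such as the paper's array-with-back-pointers data structure (Lemma~\ref{lemma:samplenonzero}); ``maintain explicitly'' by itself does not give $O(1)$-time uniform sampling.
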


From the above, we can easily obtain a randomized algorithm that runs in \pnew{sub-logarithmic} time (for constant $\varepsilon>0$) for the setting that $W$ might not be constant. 

\begin{corollary}\label{cor:rand_WMST}
	Let $\varepsilon\in (\frac{1}{n^4},1)$. There exists a fully dynamic algorithm that, with high probability, maintains an estimator $\overline{M}$ that is a $(1+\varepsilon)$-approximation of the weight $M$ of an MSF of a graph $G$ with edge weights from $[1,W]$ such that  \pnew{$W=O((m^*)^{1/6}/\log^{2/3} n)$}, and has 
	\begin{itemize}
		\item \pnew{$O((\log W)/\varepsilon^{4})$} worst-case time per update operation; 
		\item \pnew{$O(1)$} worst-case time per update operation, if  $W$ and $\varepsilon$ are constant;
		\item \pnew{$o(\log n))$} worst-case time per update operation, if  \pnew{$W=2^{o(\log n)}$ and $\varepsilon$ is constant}.
	\end{itemize}
	The algorithm works against an adaptive adversary. 
\end{corollary}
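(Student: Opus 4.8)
The plan is to obtain Corollary~\ref{cor:rand_WMST} as a direct specialization of Theorem~\ref{thm:random_WMST}: I would substitute the assumed bound $W=O((m^*)^{1/6}/\log^{2/3}n)$ into the worst-case update time guaranteed by that theorem and then simplify the resulting logarithmic factors using the hypothesis $\varepsilon>1/n^4$. Recall that Theorem~\ref{thm:random_WMST} gives a worst-case update time equal to the sum of a ``cheap'' term $O((\log W)/\varepsilon)$ and an ``expensive'' term
\[
T \;:=\; \frac{W^3\,(\log W)\,\log(W/\varepsilon)\,\big(\log(\log W/\varepsilon)+\log n\big)}{\varepsilon^4\,\sqrt{m^*}}.
\]
The goal is to show $T=O((\log W)/\varepsilon^4)$, which then makes the total update time $O((\log W)/\varepsilon^4)$ and yields the first bullet; the remaining two bullets follow by plugging in constant $W,\varepsilon$ and the condition $W=2^{o(\log n)}$, respectively, and the guarantees against an adaptive adversary together with the high-probability correctness are inherited verbatim from Theorem~\ref{thm:random_WMST}.

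First I would handle the polynomial-in-$W$ part of $T$. From $W=O((m^*)^{1/6}/\log^{2/3}n)$ we get $W^6=O(m^*/\log^4 n)$, hence
\[
\frac{W^3}{\sqrt{m^*}} \;=\; O\!\left(\frac{1}{\log^2 n}\right),
\]
so that
\[
T \;=\; O\!\left(\frac{(\log W)\,\log(W/\varepsilon)\,\big(\log(\log W/\varepsilon)+\log n\big)}{\varepsilon^4\,\log^2 n}\right).
\]

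Next I would bound every remaining logarithmic factor by $O(\log n)$. Since $G$ always has at most $\binom{n}{2}$ edges, $m^*\le n^2$, so the hypothesis on $W$ forces $W=O(n^{1/3})$ and hence $\log W=O(\log n)$. From $\varepsilon>1/n^4$ we get $\log(1/\varepsilon)=O(\log n)$, whence $\log(W/\varepsilon)=\log W+\log(1/\varepsilon)=O(\log n)$ and also $\log(\log W/\varepsilon)=O(\log n)$, so in particular $\log(\log W/\varepsilon)+\log n=O(\log n)$. Substituting these three $O(\log n)$ bounds cancels the $\log^2 n$ in the denominator and leaves $T=O((\log W)/\varepsilon^4)$. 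Adding the cheap term and using $\varepsilon^4\le\varepsilon$ (since $\varepsilon<1$) gives the claimed $O((\log W)/\varepsilon^4)$ worst-case update time. For the last two bullets: if $W$ and $\varepsilon$ are constant then $\log W=O(1)$ and the bound becomes $O(1)$; if $\varepsilon$ is constant and $W=2^{o(\log n)}$ then $\log W=o(\log n)$ and the bound becomes $o(\log n)$.

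There is essentially no real obstacle here --- it is a routine computation. The one point that requires a little care is the second step: one has to notice that the hypothesis $W=O((m^*)^{1/6}/\log^{2/3}n)$ combined with the trivial bound $m^*\le n^2$ is what guarantees that every surviving logarithm is $O(\log n)$ (and not merely $O(\log W)$, which would be too weak to absorb the $\log^2 n$ in the denominator). Beyond this bookkeeping, the corollary is immediate from Theorem~\ref{thm:random_WMST}.
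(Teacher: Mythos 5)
Your proof is correct and matches the paper's argument essentially verbatim: both substitute $W^3 = O(\sqrt{m^*}/\log^2 n)$ into the expensive term of Theorem~\ref{thm:random_WMST}, observe via $m^* \le O(n^2)$ and $\varepsilon > 1/n^4$ that $\log W$, $\log(W/\varepsilon)$, and $\log(\log W/\varepsilon)+\log n$ are all $O(\log n)$, and let the resulting $\log^2 n$ cancel to leave $O((\log W)/\varepsilon^4)$, with the last two bullets following by direct substitution. The only minor difference is that you spell out the step $\log(W/\varepsilon)=O(\log n)$ a bit more explicitly than the paper's one-line calculation does.
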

\begin{proof}
	By applying Theorem \ref{thm:random_WMST} with the assumptions that $\varepsilon\in (\frac{1}{n^4},1)$ and that $W=O((m^*)^{1/6}/\log^{2/3} n)$, we obtain a corresponding algorithm with worst-case update time 
	\pnew{	\begin{align*}
			&O\left(\frac{\log W}{\varepsilon}+\frac{W^3(\log W)\log (W/\varepsilon)(\log({\log (W/\varepsilon)})+\log n)}{\varepsilon^4\cdot \sqrt{m^*}}\right)\\
			=&O\left(\frac{\log W}{\varepsilon}+\frac{\sqrt{m^*}(\log W)\log (W/\varepsilon)(\log n)}{\varepsilon^4\cdot \sqrt{m^*}\cdot \log^2 n}\right)
			=O(\frac{\log W}{\varepsilon^4}),
		\end{align*}
		where we used the fact that $m^*\leq O(n^2)$.}
	This finishes the proof of the corollary.
\end{proof}
The requirement that $W$ needs to be small compared to the minimum number of edges in the graph might look artificial. However, we present the following lower bound for any randomized data structure that dynamically maintains the approximate weight of an MSF, which shows that some requirement of this type is necessary. %
\begin{restatable}{theorem}{thmlower}\label{thm:lower}
	For any constant $\varepsilon,\alpha>0$ and $W=n^{\alpha}$, 
	any data structure that dynamically maintains with high probability the weight of an MSF of a graph $G$ with edge weights in $[1,W]$ and  $W=\Omega(\varepsilon m^*)$ within a multiplicative factor of  $(1+\varepsilon)$, or an additive error less than $W/2$, must perform $\Omega(\log n)$ cell probes, where each cell has size $O(\log n)$.
\end{restatable}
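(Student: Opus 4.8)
The plan is to reduce from the $\Omega(\log n')$ cell-probe lower bound of Patrascu and Demaine~\cite{PatrascuD06} for dynamic connectivity, which holds for randomized data structures that answer correctly with high probability, holds amortized over a sequence of $t = \mathrm{poly}(n')$ operations, and, crucially, holds already on hard instances in which the graph is at all times a forest (indeed a disjoint union of paths) on $n'$ vertices undergoing edge insertions/deletions and pairwise connectivity queries $\mathrm{conn}(u,v)$. I will set $n' = c\, n^{\alpha}$ for a sufficiently small constant $c = c(\varepsilon) > 0$ (and $n' = n$ in the easy case $\alpha \ge 1$), designate $n'$ of the $n$ vertices of $G$ as active, keep the other $n - n'$ isolated, and simulate a hard forest-connectivity instance on the active vertices using unit-weight edges of $G$. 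Because those edges form a forest, at every moment the weight of an MSF of $G$ equals the number $e_0 \le n' - 1$ of edges currently present, a value the reduction maintains exactly as it replays the simulated update sequence.

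To answer a simulated query $\mathrm{conn}(u,v)$, the reduction inserts the edge $(u,v)$ into $G$ with weight $W$, reads off the maintained estimate $\overline{M}$, and then deletes $(u,v)$. If $u$ and $v$ are in the same tree, the inserted edge only closes a cycle on which it is the unique heaviest edge, so an MSF of $G$ still has weight $e_0$; if they are in different trees, the edge joins the trees and the MSF weight becomes $e_0 + W$. Knowing $e_0$, the reduction distinguishes the two cases: under the multiplicative guarantee it reports ``connected'' iff $\overline M \le (1+\varepsilon) e_0$, and under the additive guarantee iff $\overline M < e_0 + W/2$. The additive test is always decisive; the multiplicative test is decisive as soon as $(1-\varepsilon) W > 2\varepsilon e_0$, which holds because $e_0 \le n' - 1 = c\, n^\alpha - 1$ and $W = n^\alpha$, provided $c$ is a small enough constant. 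Each simulated operation costs $O(1)$ operations of the approximate-MSF data structure, at all times $G$ has at most $\Theta(n') + 1 = \Theta(c\, n^\alpha)$ edges, so $m^* = O(c\, n^\alpha)$; choosing $c$ small also guarantees $W = n^\alpha = \Omega(\varepsilon m^*)$, and every edge weight lies in $[1, W]$.

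It remains to assemble the lower bound. The simulated sequence has $t = \mathrm{poly}(n') = \mathrm{poly}(n)$ operations, hence the reduction issues $\mathrm{poly}(n)$ operations to the approximate-MSF structure; a union bound over these shows that with high probability every estimate $\overline M$ it reads is within the promised error, so every simulated connectivity query is answered correctly. Thus the reduction is a correct-with-high-probability data structure for forest connectivity on $n'$ vertices, and since $n' = n^{\Theta(1)}$ a cell of size $O(\log n)$ is also of size $O(\log n')$, so~\cite{PatrascuD06} forces it to make $\Omega(t \log n')$ cell probes over the sequence. Each simulated operation triggers only $O(1)$ operations of the approximate-MSF structure (the reduction does only constant bookkeeping per step besides these calls), so that structure makes $\Omega(\log n')$ cell probes per operation (amortized), and $\log n' = \alpha \log n + \log c = \Theta(\log n)$; this is the claimed $\Omega(\log n)$ bound.

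The main obstacle is obtaining the base lower bound in exactly the form used: it must (i) apply to randomized data structures that are correct with high probability, (ii) hold on forest hard instances, and (iii) be an $\Omega(\log n')$ bound for word size $\Theta(\log n')$. Property (ii) is what lets the reduction know $e_0$ exactly, and this is indispensable for the additive-error clause --- with a non-forest structural graph the reduction would be forced to compare two estimates $\overline M_0$ and $\overline M_1$ taken before and after inserting the heavy edge, and each of these can be wrong by almost $W/2$, so their difference can be wrong by almost $W$, which is exactly the size of the signal ($e_0$ versus $e_0 + W$) one is trying to detect; no amount of gadgeteering can increase this gap, since a single connectivity query changes the number of MSF edges by exactly one. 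For the multiplicative clause alone one can dispense with (ii) and use the ratio $\overline M_1 / \overline M_0$, which separates the two cases whenever $W$ is a large enough constant multiple of $n'$. Given the base bound, the rest is bookkeeping: choosing the single constant $c$ so that $W = n^\alpha$, $W = \Omega(\varepsilon m^*)$, $n' \le n$, and the multiplicative gap condition all hold at once.
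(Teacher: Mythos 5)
Your proof is correct and follows essentially the same route as the paper's. Both reduce from the Patrascu--Demaine dynamic-connectivity lower bound by running the approximate-MSF structure on a copy of the hard instance (padded with $n-n'$ isolated vertices) whose structural edges all have weight $1$, and by simulating each connectivity query via inserting a weight-$W$ edge and reading whether the estimated MSF weight is near $e_0$ or near $e_0+W$. The paper packages this as two steps (Theorem 5.3 on the grid graph with an auxiliary vertex $s$, then padding in Theorem 1.5), whereas you invoke the forest structure of the hard instance directly and track $e_0$ yourself — a slightly cleaner presentation of the same idea, and you handle the $\alpha\geq 1$ corner case and the per-operation success-probability union bound more explicitly than the paper does — but the underlying argument is identical.
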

This stands in interesting contrast to the above Corollary \ref{cor:rand_WMST}, which \pnew{has update time $o(\log n)$, but requires $W$ to be $\min\{O((m^*)^{1/6}/\log^{2/3} n), 2^{o({\log n})}\}$}.%

We remark that main technical contributions are our algorithmic results given by Theorem \ref{thm:deterministic_WMST} and \ref{thm:random_WMST}, which are summarized in Table \ref{table:ours}, for ease of comparison. %

\subsection{Our techniques}
Both our deterministic and randomized dynamic graph algorithms for this problem use an approach developed in the area of property testing: Build an efficient algorithm for estimating the number of connected components (CCs) in a graph and apply it to suitable subgraphs of $G$~\cite{CRT05:MST,CS09:MST,AGM12:linear}.
More specifically, we build constant-time dynamic algorithms  that estimate the number of
CCs with appropriate additive error, apply them to $O(\log W/\varepsilon)$ many subgraphs,
and then use an extension of the formula in~\cite{CRT05:MST} to disconnected graphs to estimate the weight of an MSF. {We stress that most of previous work uses the formula for \emph{connected} graphs, while in our setting, it is more natural to remove the requirement that the dynamic graph is always connected. Thus, we  extend the formula to work for disconnected graphs (see Lemma \ref{lem:mst_ncc}).
	To use this formula our algorithms need to estimate the number of CCs with an additive error $\varepsilon'\cdot \nis(G)$, where $\nis(G)$ is the number of \emph{non-isolated} vertices in $G$ (see below why this is crucial). Our randomized dynamic algorithm for this problem achieves such an error in time 
	$O(\max\{1,{\log(1/\varepsilon')\log n}/(\varepsilon'^3\cdot \sqrt{m^*})\})$ with high probability, and our deterministic algorithm achieves the same error $\varepsilon' \cdot \nis(G)$ in time $O((1/\varepsilon')^2)$. 
	
	{\bf The randomized algorithm.}
	The \emph{randomized} algorithm 
	for estimating the number of connected components 
	is based on the following {simple approach} (used previously e.g.~\cite{GuptaP13}):
	{\em Whenever (1) there exists a static algorithm that in time $T$ estimates a desired parameter (here the number of CCs) with an additive error of $\err$
		and (2) 
		each update operation changes the value of a desired parameter only by an additive value up to $+/- \delta$ (here 1),
		then  running the static algorithm every $\frac{\err}{\delta}$ update
		operations leads to a dynamic algorithm with additive error of at most $2 \err$ and {\em amortized} time $O(\frac{T \delta}{\err})$ per update.}

	To turn this into a {\em worst-case} time bound observe that we can run the static algorithm ``in the background''  executing $ O(\frac{T \delta}{\err})$ steps of the static algorithm at every update operation. This increases the additive error only by a constant factor.
	We make use of the static (constant-time) algorithm of~\cite{BKM14:numcc} (that improves upon~\cite{CRT05:MST}) for estimating the number of CCs 
	with additive error $\varepsilon' n$ as a subroutine. However, directly invoking this static algorithm we can obtain only a dynamic estimator for the number of CCs with an additive error $\varepsilon' n^{2/3}\log^{2/3}n$ with $O(1/\varepsilon'^{3})$ update time (see Appendix \ref{app:note_ncc}). 
	This is not sufficient for the MSF approximation algorithm: Specifically, we need to be able to dynamically approximate the number of CCs with an additive error $\varepsilon'\cdot \nis(G)$ with fast update time and $\nis(G)$ can be arbitrarily small. %
	To achieve this smaller additive error, we carefully choose different values of $\err$ throughout all the updates and  sample the non-isolated vertices uniformly at random in the dynamic graph. We give more detailed discussions on some other technical difficulties (e.g., we need to  ``synchronize'' the updates in $G$ and its subgraphs to achieve the desired update time) and how we handle them in Section~\ref{subsec:randomized}. 
	
	Sampling non-isolated vertices uniformly at random in a dynamic graph is exactly the problem solved by $\ell_0$-sampling in streaming algorithms (see e.g. \cite{jowhari2011tight}). However, all such algorithms, while only using $O(\poly\log n)$ space, require time $\Omega(\log n)$ per operation. We give a relatively simple data structure (see Lemma \ref{lemma:samplenonzero}) that allows to sample all non-zero entries in a dynamically changing vector of size $n$ in constant time (no matter how small their number might be), albeit with space $O(n)$. We believe that our data
	structure might be of independent interest. 
	
	{Our randomized algorithm uses ``fresh'' random bits whenever it invokes the static (constant-time) algorithm on the current graph, and does not reuse any information computed before. Furthermore, our analysis assumes that the adversary changes the graph in the worst possible way, i.e., changes the number of CCs by at most $1$ in each update. Thus,  our algorithm works against an adaptive adversary, i.e. an adversary that sees the answers to all queries {\em before} deciding on the next update operation.}

	{\bf The deterministic algorithm.}
	To design a {\em deterministic worst-case} dynamic algorithm we cannot simply invoke the static constant-time algorithm: that algorithm is inherently randomized as it is designed with the goal of reading the smallest possible portion of the graph.
	Instead we carefully implement the random local exploration that underlies the static randomized algorithm in a deterministic way. 
	Our key observations are that {\em (1) we only need to count the number of CCs which are small in size, i.e., which consist of up to $1/\varepsilon'$ vertices,
		as the number of larger CCs is at most $\varepsilon' \cdot \nis(G)$ {\em and} (2) these counts can be maintained in  worst-case
		time $O(1/\varepsilon'^2)$ after each update by exploring a neighborhood of $O(1/\varepsilon')$ vertices ``near'' the endpoints of the updated edge.}
	
	Both the randomized and the deterministic MSF algorithm run their respective CCs estimation algorithms on each of the $O({\log W}/{\varepsilon})$ relevant subgraphs with $\varepsilon' = \Theta(\varepsilon/W)$. 
	{Using 
		the above-mentioned formula we obtain an estimate $\overline{M}$ that approximates the weight $M$ of MSF with
		an additive error of $\varepsilon \nis(G)/4$.}
	As the weight of any MSF is at least $\nis(G)/2$, this additive error is at most $\varepsilon M/2$, i.e.,~a
	$(1+\varepsilon)$-approximation of $M$. {Whenever a query on $M$ is asked, we return $\overline{M}$ in constant time.}
	For our deterministic algorithm for MSF, the time per edge update is $O(1/\varepsilon'^2) = O(W^2/\varepsilon^2)$ for each of  the $O({\log W}/{\varepsilon})$ subgraphs,  resulting in a 
	worst-case $O(W^2 \log W/\varepsilon^3)$ update time. The running time of our randomized algorithm for MSF can be analyzed analogously. %

	We remark that 
	{our algorithms are much simpler than} the fastest dynamic (exact or approximate) MSF algorithms: the algorithms of~\cite{HRW15:MST,HLT01:connectivity,GKKT15:connectivity,HuangHKP17,HenzingerK01}
	maintain a hierarchical decomposition with $O(\log n)$ levels (in the approximate setting this is even $O(\log n \log W)$ levels), the algorithm of~\cite{NSW17:MST} maintains a decomposition of the graph into
	expanders and a ``remaining'' part.

	\subparagraph{The lower bounds.} Both the lower bounds given in Theorem \ref{thm:deterministiclowerbound} and Theorem \ref{thm:lower} are proven via simple reductions from the previous cell probe lower bounds for dynamic connectivity. Let us take the proof of Theorem \ref{thm:deterministiclowerbound} for example. We give a reduction, similar to the one in \cite{henzinger1998lower}, from a problem called the \emph{parity prefix sum problem} to the problem of dynamically maintaining the weight of an MSF of a graph with large maximum edge weights within a multiplicative factor of $1+\varepsilon$. That is, using our reduction, one can use an algorithm for the latter problem to solve the former problem, for which a cell-probe lower bound is long known \cite{FS89:cell}. This gives a corresponding lower bound for the problem of dynamically maintaining a $(1+\varepsilon)$-approximation of the weight of an MSF of a graph. The proof of Theorem \ref{thm:lower} is similar, except that we use a variant of reduction from the problem of dynamic connectivity, whose cell-probe lower bound is given in \cite{PatrascuD06}. 

	\subsection{Other related work}\label{sec:related_work}
	There are few other graph problems for which constant-time dynamic algorithms are known: (a) maximal matching (randomized)~\cite{Sol16:matching}, (b) $(2+\varepsilon)$-approximate vertex cover (deterministic)~\cite{BhattacharyaK2019}, (c)  $(2k-1)$-stretch spanner of size $O(n^{1+\frac{1}{k}}\log^2 n)$ for \emph{constant} $k$ (randomized)~\cite{BKS12:spanner}, and (d)
	$(\Delta+1)$-vertex coloring~\cite{henzinger2019constant,bhattacharya2019fully}.  All these are \emph{amortized} time bounds and almost all these algorithms except the coloring algorithm in \cite{henzinger2019constant} maintain a sophisticated hierarchical graph decomposition, which makes them rather impractical. %

	Fully dynamic algorithms for minimum spanning forest and connectivity testing have also been investigated for some special classes of graphs (e.g.,\cite{eppstein1992maintenance,eppstein1996separator}). There are also works on approximating the weight of minimum spanning tree (or forest) in the query model \cite{CRT05:MST,CS09:MST} in sublinear time and data streaming model \cite{AGM12:linear,huang2019dynamic,PS18:stream} in sublinear space. 

	\section{Preliminaries}\label{sec:preliminaries}
	Let $G$ be a (static) undirected graph without
	parallel edges and with edge weights in $[1,W]$. Let $M$ be the weight of an MSF of $G$. We use {\em $\ncc(G)$} to denote the number of CCs of $G$, $\nis(G)$ to denote the number of \emph{non-isolated} vertices of $G$, and {\em size} of a CC  to denote the number of vertices in the CC. 
	
	Our algorithms exploit a relation between the weight of an MSF of a graph $G$ and the number of CCs of some subgraphs of $G$.

	For any $\varepsilon>0$, we let $r=\lceil\log_{1+{\varepsilon}} W\rceil$, $\ell_i=(1+{\varepsilon})^i$, and $\lambda_i=(1+{\varepsilon})^{i+1}-(1+{\varepsilon})^i$. For any $i\geq 0$, we let $G^{(i)}$ denote the subgraph of $G$ spanned by all edges with weights at most $\ell_i$, and let $c^{(i)}$ denote the number of CCs in $G^{(i)}$. %
	We will make use of the following lemma, whose proof is almost the same as in \cite{AGM12:linear} (see also \cite{CRT05:MST,CS09:MST}), except that we are considering a graph that is not necessarily connected. We present the proof here for the sake of completeness. 
	\begin{lemma}[\cite{CRT05:MST,CS09:MST,AGM12:linear}]\label{lem:mst_ncc} 
		Let $\varepsilon\in(0,1)$. Let $G$ be a weighted graph and let $M,W,$ and $c^{(i)}$ be defined as above.  Then  
		\begin{eqnarray}
			M\leq n-c^{(r)} \cdot (1+{\varepsilon})^r + \sum_{i=0}^{r-1} \lambda_i\cdot c^{(i)}\leq (1+{\varepsilon}) M. \label{eqn:app_mst}
		\end{eqnarray}	
	\end{lemma}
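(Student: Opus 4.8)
The plan is to reduce the statement to the well-known integer-weight identity of~\cite{CRT05:MST} by first rounding every edge weight \emph{up} to the nearest power of $(1+\varepsilon)$, and then to track the extra factor incurred by this rounding. Concretely, for an edge $e$ let $j(e)$ be the smallest index with $w(e)\le \ell_{j(e)}$ and set $\tilde{w}(e)=\ell_{j(e)}$. Since $\ell_0=1\le w(e)\le W\le (1+\varepsilon)^r=\ell_r$ (by the choice $r=\lceil\log_{1+\varepsilon}W\rceil$), we have $0\le j(e)\le r$, and by construction $w(e)\le \tilde{w}(e)\le(1+\varepsilon)w(e)$ and $\{e: \tilde w(e)\le \ell_i\}=\{e: w(e)\le \ell_i\}=E(G^{(i)})$. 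Let $\tilde{M}$ be the weight of an MSF of $G$ under the weights $\tilde{w}$. Evaluating a $w$-MSF $F^*$ (with $\sum_{e\in F^*}w(e)=M$) under $\tilde{w}$ gives $\tilde{M}\le\sum_{e\in F^*}\tilde{w}(e)\le(1+\varepsilon)M$, while evaluating a $\tilde{w}$-MSF $F$ under $w$ gives $M\le\sum_{e\in F}w(e)\le\sum_{e\in F}\tilde{w}(e)=\tilde{M}$ (here $F$ is maximal, hence a spanning forest of $G$). Thus $M\le\tilde{M}\le(1+\varepsilon)M$, and it remains to show that the middle expression in \eqref{eqn:app_mst} equals $\tilde{M}$ exactly.

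Next I would fix the $\tilde{w}$-MSF $F$ to be the one produced by Kruskal's algorithm, processing edges in non-decreasing $\tilde{w}$-order. The key structural fact is the standard Kruskal invariant: after all edges of $\tilde{w}$-weight at most $\ell_i$ have been processed, the current forest is a maximal forest of the subgraph they span, and it consists of exactly the edges of $F$ of $\tilde{w}$-weight at most $\ell_i$. Since those edges are precisely $E(G^{(i)})$ (on the full vertex set $\{1,\dots,n\}$), this yields $\lvert F\cap E(G^{(i)})\rvert=n-c^{(i)}$ for every $i$; in particular $G^{(r)}=G$, so $\lvert F\rvert=n-c^{(r)}$, and hence the number of edges of $F$ outside $G^{(i)}$ equals $c^{(i)}-c^{(r)}$.

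Finally I would compute $\tilde{M}=\sum_{e\in F}\tilde{w}(e)$ via the telescoping identity $\tilde{w}(e)=\ell_{j(e)}=1+\sum_{i=0}^{r-1}\lambda_i\cdot\mathbf{1}[\,w(e)>\ell_i\,]$, valid because $\lambda_i=\ell_{i+1}-\ell_i$ and $\ell_0=1$. Summing over $e\in F$ and using $\sum_{e\in F}\mathbf{1}[w(e)>\ell_i]=\lvert F\setminus E(G^{(i)})\rvert=c^{(i)}-c^{(r)}$ gives $\tilde{M}=(n-c^{(r)})+\sum_{i=0}^{r-1}\lambda_i\,(c^{(i)}-c^{(r)})$, and since $\sum_{i=0}^{r-1}\lambda_i=\ell_r-\ell_0=(1+\varepsilon)^r-1$ this simplifies to $n-c^{(r)}(1+\varepsilon)^r+\sum_{i=0}^{r-1}\lambda_i c^{(i)}$, the middle term of \eqref{eqn:app_mst}. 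Combined with $M\le\tilde{M}\le(1+\varepsilon)M$ from the first paragraph, this proves the lemma.

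The only genuinely delicate point is the heavy ties among the rounded weights: an arbitrary $\tilde{w}$-MSF need not restrict to a maximal forest of each $G^{(i)}$, so I expect the main obstacle to be arguing this cleanly; choosing $F$ to be a Kruskal output is what makes $F\cap E(G^{(i)})$ maximal in $G^{(i)}$ \emph{simultaneously} for all $i$, resolving it. Minor bookkeeping points — that each $c^{(i)}$ counts isolated vertices (consistent with $\lvert F\cap E(G^{(i)})\rvert=n-c^{(i)}$), and the degenerate case $W=1$ where $r=0$ and the sum is empty — are checked directly.
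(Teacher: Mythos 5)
Your proof is correct and follows essentially the same route as the paper: round weights up to powers of $(1+\varepsilon)$, run Kruskal's algorithm, count the $n-c^{(i)}$ forest edges of weight at most $\ell_i$, and telescope. Your worry about ties is overcautious — the usual exchange argument shows that \emph{any} MSF of the rounded graph restricts to a maximal forest of each $G^{(i)}$, not just the Kruskal output — but pinning $F$ down via Kruskal (as the paper also does) is a perfectly clean way to avoid having to make that observation.
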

	\begin{proof}%
		We let $G'$ be the graph that is obtained by rounding each edge weight up to the nearest power of $(1+\varepsilon)$. Note that the weight $M'$ of an MSF of $G'$ satisfies that $M\leq M'\leq (1+\varepsilon) M$. %
		
		Now we invoke the Kruskal's algorithm on $G'$ to compute $M'$. The algorithm will add $n-c^{(0)}$ edges of weight $\ell_0=1$,  $c^{(0)}-c^{(1)}$ edges of weight $\ell_1$, and so on. Thus
		\[
		M'=n-c^{(0)} + \sum_{i=0}^{r-1} \ell_{i+1} (c^{(i)}-c^{({i+1})})=n-c^{(r)} \ell_r + \sum_{i=0}^{r-1}(\ell_{i+1}-\ell_{i}) \cdot c^{(i)}.
		\] 
		This proves the statement of the lemma.
	\end{proof}
	
	\section{A Deterministic Dynamic Algorithm}\label{sec:parameters} 
	In this section, we present our deterministic dynamic algorithm for maintaining the weight $M$ of a minimum spanning forest of 
	a graph $G$ without
	parallel edges and with edge weights in $[1,W]$, and give the proof of Theorem~\ref{thm:deterministic_WMST}.

	We first give a deterministic dynamic algorithm for estimating the number of connected components (CCs) with appropriate additive error. %

	\begin{restatable}{theorem}{deterNCC}\label{thm:deterministic_NCC}
		Let $\varepsilon>0$. There exists a fully dynamic and deterministic algorithm that preprocesses a potentially non-empty graph in $O(\frac{n}{\varepsilon})$ time, and maintains an estimator $\overline{c}$ s.t., $|\overline{c}-\ncc(G)|\leq \varepsilon \cdot \nis(G)$ with worst-case $O(1/\varepsilon^2)$ update time per operation. 
	\end{restatable}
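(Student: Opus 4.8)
The plan is to exploit the two observations stated in the techniques section: we only need to count exactly the connected components of small size (at most $1/\varepsilon$ vertices), since the number of components of size larger than $1/\varepsilon$ is at most $\varepsilon\cdot\nis(G)$ and hence can simply be ignored up to the allowed additive error. So I would let $\overline c$ be the (exact) number of connected components of size $\le 1/\varepsilon$ in the current graph. Writing $\ncc(G)=(\text{\# small CCs})+(\text{\# large CCs})$, and noting that every large CC has $>1/\varepsilon$ non-isolated vertices so there are at most $\varepsilon\cdot\nis(G)$ of them, we get $0\le \ncc(G)-\overline c\le \varepsilon\cdot\nis(G)$, which is the claimed bound (in fact one-sided). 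This reduces the problem to: maintain, under edge insertions and deletions, the number of connected components whose vertex set has size at most $\lceil 1/\varepsilon\rceil$.

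For the data structure I would store, for each vertex $v$, its current component together with the component's size, capped at the threshold $\tau:=\lceil 1/\varepsilon\rceil+1$ (a component is ``big'' once it reaches $\tau$ vertices, and we never need finer information than that). Concretely, for each vertex in a small component I maintain a pointer to a canonical representative of its component and a list of the (at most $\tau$) vertices in it; vertices in big components carry only a ``big'' flag. The counter $\overline c$ is incremented/decremented as components are created, merged, split, or cross the threshold. The preprocessing step runs a BFS/DFS from each vertex, truncating each search after $\tau$ vertices are discovered, which takes $O(n\tau)=O(n/\varepsilon)$ time and initializes all component records and the counter.

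On an update I would re-examine only a neighborhood of size $O(1/\varepsilon)$ around the endpoints of the changed edge $(u,v)$. For an \textbf{insertion}: if $u$ and $v$ already lie in the same component, or at least one of them is big, nothing changes except possibly merging two small components — in which case I run a truncated BFS from $u$ exploring at most $\tau$ vertices; if the union has $<\tau$ vertices I merge the two vertex lists, repoint the smaller list, update sizes, and decrement $\overline c$ by one; if it reaches $\tau$ I mark all $O(1/\varepsilon)$ touched vertices big and decrement $\overline c$ by the number (zero, one, or two) of small components that got absorbed. For a \textbf{deletion}: if both endpoints are in a big component, the component might split into pieces, but I only need to detect pieces of size $<\tau$; I run truncated BFS from $u$ and from $v$ in parallel, each capped at $\tau$ vertices — if one search terminates having found a component of size $<\tau$, a new small component has appeared and I create its record and increment $\overline c$ (handling the symmetric and both-small subcases analogously, including the subcase where a small component breaks into two). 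In every case only $O(1/\varepsilon)$ vertices and the $O(1/\varepsilon)$ edges incident to them inside the truncated search are touched, and each pays $O(1/\varepsilon)$ work for list maintenance, giving worst-case $O(1/\varepsilon^2)$ per update; the accounting that $\overline c$ stays exactly equal to the number of small components is a finite case check.

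The main obstacle is the deletion case: after deleting an edge of a \emph{big} component one cannot afford to determine the full new component structure, so the argument must carefully show that a truncated search of $O(1/\varepsilon)$ steps from each endpoint suffices to detect \emph{every} newly created small component and to correctly decide, for each endpoint, whether it now lies in a component of size $<\tau$ or $\ge\tau$ — in particular that if such a bounded search from $u$ reaches $\tau$ distinct vertices without exhausting the component, then $u$'s component is genuinely big and no update to $\overline c$ is needed on $u$'s side. The rest is bookkeeping: maintaining the truncated vertex lists and size counters consistently across merges and splits, and verifying the handful of cases (both endpoints small/small, small/big, big/big) for both insertion and deletion.
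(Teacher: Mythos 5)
Your proposal is correct and matches the paper's approach: count exactly the CCs of size at most $1/\varepsilon$ by truncated BFS (there are at most $\varepsilon\cdot\nis(G)$ larger ones), with $O(n/\varepsilon)$ preprocessing and $O(1/\varepsilon^2)$ per update via a constant number of BFS calls truncated at $O(1/\varepsilon)$ vertices. The only cosmetic difference is that the paper keeps no per-vertex component records at all — it simply re-runs truncated BFS from the endpoints both before and after the update and compares the three resulting subgraph sizes — whereas you additionally maintain component pointers/lists and ``big'' flags; this is harmless but not needed for correctness or the stated bounds.
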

	
	We remark that in the above theorem, the initial graph can be an arbitrary graph, and the performance guarantee holds even if the algorithm is not aware of the value $\nis(G)$.  The proof of the above theorem is deferred to the end of this section. 
	By combining %
	Theorem~\ref{thm:deterministic_NCC} and the relation in Lemma~\ref{lem:mst_ncc}, we are ready to give the algorithm \textsc{DeterEstWMSF} for maintaining an estimator of the weight of an MST.
	
	\begin{center}
		
		\begin{tabular}{|p{\textwidth}|}
			\hline
			\textsc{DeterEstWMSF}($G,\varepsilon,W$) $\triangleright$ \textbf{Dynamically Maintaining an estimator $\overline{M}$ of the weight of an MSF}%
			\begin{enumerate}
				\item Let $r=\lceil\log_{1+{\varepsilon/2}} W\rceil$, $\ell_i=(1+{\varepsilon/2})^i$, and $\lambda_i=(1+{\varepsilon/2})^{i+1}-(1+{\varepsilon/2})^i$.
				\item Let $G^{(i)}$ denote the subgraph of the current graph $G$ spanned by all edges with weights at most $\ell_i$. Let $c^{(i)}$ be the number of CCs of $G^{(i)}$.
				\item 		
				For each $1 \leq i\leq r$, invoke the dynamic algorithm from Theorem~\ref{thm:deterministic_NCC} to maintain an estimator $\overline{c}_i$ 			
				for $c^{(i)}$ with $\varepsilon' = 
				\varepsilon /(12W)$. %
				\item Define $\overline{M}:=n-\overline{c}_r\cdot (1+{\varepsilon}/{2})^r+\sum_{i=0}^{r-1}\lambda_i\cdot \overline{c}_i$.
			\end{enumerate}
			
			\\ \hline
		\end{tabular}
	\end{center}
	
	Now we are ready to prove Theorem \ref{thm:deterministic_WMST}, which is restated in the following for the sake of readability.
	\deterWMSF*

	\begin{proof}
		Let $r,\ell_i, \lambda_i, G^{(i)}, c^{(i)}, \overline{M}$ be defined as in the above algorithm \textsc{DeterEstWMSF}($G,\varepsilon,W$). Note that $\nis(G)\ge \nis(G^{(i)})$, since $G^{(i)}$ is a subgraph of $G$ for any $1 \le i \le r$.

		Since $\overline{c}_i$ is the estimator of $c^{(i)}$ obtained by invoking the dynamic algorithm in Theorem~\ref{thm:deterministic_NCC} with approximation parameter $\varepsilon' = 
		\varepsilon /(12W)$, we have that  for each $1 \leq i\leq r$, 
		$$|\overline{c}_i-c^{(i)}|\leq\varepsilon\cdot \nis(G^{(i)})/({12W}) \leq {\varepsilon \cdot \nis(G)}/({12 W})$$

		Let $X=n-c^{(r)}\cdot (1+{\varepsilon}/{2})^r+\sum_{i=0}^{r-1}\lambda_i\cdot c^{(i)}$. Recall that $M$ is the weight of an MSF of the current graph $G$. We have 
		\begin{align*}
			|\overline{M}-X| &\leq (1+\varepsilon/2)^r\abs{\overline{c}_r-c^{(r)}}+\sum_{i=0}^{r-1}\lambda_i\abs{\overline{c}_i-c^{(i)}}\\
			&\leq (1+\varepsilon/2)^r \cdot (\varepsilon \cdot \nis(G)/(12W)) +  (\varepsilon/2)\sum_{i=0}^{r-1}(1+\varepsilon/2)^{i}\cdot (\varepsilon \cdot \nis(G)/(12W))\\
			& < (1+\varepsilon/2)^r (\varepsilon \cdot \nis(G)/(12W)) + (1+\varepsilon/2)^r (\varepsilon \cdot \nis(G)/(12W))\\
			&\leq 2 (1+\varepsilon/2)^{1+\log_{1+{\varepsilon/2}}W} \cdot ({\varepsilon \cdot \nis(G)}/(12W))\\
			&\leq 2\cdot (3/2)\cdot ({\varepsilon \cdot \nis(G)}/12)\leq \varepsilon \cdot \nis(G)/4 \leq \varepsilon \cdot M/2,
		\end{align*}
		where the last inequality follows from the fact that $M\geq {\nis(G)}/{2}$, as each non-isolated vertex is incident
		to at least one edge (of weight at least $1$) of any MSF in the simple graph $G$.
		
		By Lemma~\ref{lem:mst_ncc} with approximation parameter $\varepsilon/2$, we have that $M\leq X\leq (1+\varepsilon/2)M$. Thus, $\overline{M}$ is a $(1+\varepsilon)$-approximation of $M$.%

		Now note that for each $1 \leq i\leq r$, for estimating $c^{(i)}$ we invoke the dynamic algorithm in Theorem \ref{thm:deterministic_NCC} with $\varepsilon' = 
		\varepsilon /(12W)$, which has worst-case time $O(1/\varepsilon'^2)=O({W^2}/{\varepsilon^2})$ per update operation. 
		Therefore, the worst-case time per update operation of \textsc{DeterEstWMSF}($G,\varepsilon,W$) is $\sum_{i=1}^r O(\frac{W^2}{\varepsilon^2})=O(\frac{r}{ \varepsilon^2}W^2)=O(\frac{W^2\cdot \log W}{\varepsilon^3})$. %

	\end{proof}

	\subsection{Proof of Theorem \ref{thm:deterministic_NCC}}
	In the following, we present the proof of Theorem \ref{thm:deterministic_NCC}, which we restate below.  
	\deterNCC*
	\begin{proof}
		For a graph $G$, we let $\nscc(G)$ denote the number of CCs of size at most $1/\varepsilon$ in $G$. We first observe that to approximate $\ncc(G)$ of a dynamic graph $G$ with an additive error $\varepsilon \cdot \nis(G)$, it suffice to compute and maintain $\nscc(G)$. %
		This is true as the total number of CCs of size larger than $\frac{1}{\varepsilon}$ is at most $\varepsilon \cdot \nis(G)$, where $\nis(G)$ is the number of non-isolated vertices of $G$. In the following, we first give a static algorithm for computing $\nscc(G_0)$, for any initial graph $G_0$, and then we show how to maintain $\nscc(G)$ dynamically.

		\paragraph{Preprocessing} 	
		We first give a \emph{static} algorithm 	\textsc{StaticNCC} for computing the number of small CCs of any initial graph $G_0$. 
		We maintain a set of $ 1/\varepsilon $ counters  $\cnt_\ell$, where $\cnt_{\ell}$ denotes the number of CCs of size $\ell$. 
		Initially, all the counters are set to  $0$ and all vertices are marked {\em unvisited}. We recursively choose an arbitrary unvisited vertex $v$, mark it  as {\em visited}  and start a BFS at $v$ which runs until (1) it has reached (e.g.~discovered an edge to) $1/\varepsilon +1$ unvisited vertices, (2) it reaches a visited vertex, or (3) the  BFS terminates because whole connected component (of size at most $1/\varepsilon$) containing $v$ has been explored. 
		Then we mark all the newly discovered vertices as {\em  visited} and update the counters accordingly. 
		More precisely, the static and dynamic algorithms are given in the following two tables.
		\begin{center}
			
			\begin{tabular}[h!]{|p{\textwidth}|}
				\hline
				\textsc{StaticNCC}($G_0,\varepsilon$) $\triangleright$	\textbf{A static algorithm for the number of CCs of size at most $1/\varepsilon$}%
				\begin{enumerate}
					\item Initialize $\cnt_\ell=0$, for each $1\leq \ell \leq 1/\varepsilon$. Mark all vertices as unvisited.
					\item While there exists some unvisited vertex $v$: 
					\begin{enumerate}
						\item Do BFS from $v$ until (i) $1/\varepsilon+1$ unvisited vertices have been reached, or (ii) any visited vertex has been reached, or (iii) no more new vertices can be reached. Mark all the newly discovered vertices in the search as visited.
						\item If (iii) occurs, and $\ell$ vertices have been reached for some $\ell \leq {1}/{\varepsilon}$, then increment $\cnt_\ell$ by $1$.
					\end{enumerate}
					\item Define the estimator  $\overline{c}:=\sum_{\ell=1}^{1/\varepsilon} \cnt_\ell$.
				\end{enumerate}
				
				\\ \hline
			\end{tabular}
		\end{center}
		
		Note that %
		by definition of the algorithm and the estimator $\overline{c}$, it holds that $\overline{c}=\nscc(G_0)$. 
		\begin{lemma}
			The algorithm \textsc{StaticNCC}($G,\varepsilon$) can be implemented in $O(n\cdot \frac{1}{\varepsilon})$ time for any $n$-vertex graph $G$. 
		\end{lemma}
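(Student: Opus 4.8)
The plan is to bound the running time by charging work to individual vertices and to individual BFS runs, showing each contributes only $O(1/\varepsilon)$. First I would pin down the implementation: store $G_0$ by adjacency arrays, keep one global boolean array $\textrm{visited}[\cdot]$, a second marker array recording which vertices have been discovered in the \emph{current} BFS run, and a short list holding exactly those currently-discovered vertices so that the second array can be reset in time proportional to the number of vertices a run touches rather than $n$. The $1/\varepsilon$ counters $\cnt_\ell$ are zeroed once at the start, and successive unvisited start vertices are found by advancing a single pointer through $1,\dots,n$; altogether this bookkeeping costs $O(n+1/\varepsilon)$ over the whole execution.

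The heart of the argument is a per-vertex bound: \emph{whenever a vertex $u$ is dequeued during some BFS run, scanning its adjacency list examines only $O(1/\varepsilon)$ edges.} The scan proceeds neighbor by neighbor, and the whole run halts the instant it sees a previously-visited vertex (case (ii)) or the run's count of freshly discovered vertices reaches $1/\varepsilon+1$ (case (i)). Hence, among the neighbors of $u$ actually inspected, at most one is visited-from-a-previous-run, at most $1/\varepsilon+1$ are newly discovered (each such inspection bumps the run counter, which is capped at $1/\varepsilon+1$), and every remaining inspected neighbor is a vertex that was \emph{already} discovered in the current run; since $G_0$ is simple, these are distinct vertices lying in the run's discovered set, whose size never exceeds $1/\varepsilon+1$. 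Summing, at most $2(1/\varepsilon+1)+1=O(1/\varepsilon)$ edges incident to $u$ are touched while processing $u$.

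To conclude I would combine this with two disjointness facts. Every vertex is set to ``visited'' exactly once — either as a start vertex or upon first discovery — so it is enqueued at most once and hence dequeued and scanned at most once; thus there are at most $n$ adjacency-list scans over the entire execution, contributing $O(n/\varepsilon)$. Moreover the sets of vertices discovered by distinct runs are pairwise disjoint, so if run $t$ discovers $j_t$ vertices then $\sum_t j_t\le n$; the per-run overhead (choosing a start vertex, possibly incrementing one counter, and clearing the current-run markers via the touched-vertices list) is $O(j_t+1)$, hence $O(n)$ in total, and there are at most $n$ runs. Adding the $O(n+1/\varepsilon)$ initialization cost yields the claimed $O(n/\varepsilon)$ bound.

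I expect the only delicate point to be the per-vertex bound, since a priori a processed vertex — in particular the start vertex of a run — could have degree $\Theta(n)$, and a textbook BFS would scan its entire list. What rescues us is exactly the conjunction of the early-stopping rules with the simplicity of $G_0$: either scanning $u$'s list triggers stopping condition (i) or (ii) within $O(1/\varepsilon)$ inspections, or else $u$'s entire neighborhood is contained in the current run's discovered set and therefore has fewer than $1/\varepsilon+1$ elements. The one other thing to watch is never paying $\Theta(n)$ to reset per-run state between runs, which the touched-vertices list handles.
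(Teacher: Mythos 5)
Your proof is correct, and it takes a genuinely different (and in my view cleaner) route than the paper's. The paper decomposes the total work by connected component and handles two cases separately: for a small component of size $\ell\le 1/\varepsilon$ it charges $O(\ell^2)$ to the component and sums $\sum_\ell \cnt_\ell\cdot\ell^2\le (1/\varepsilon)\sum_\ell \cnt_\ell\cdot\ell\le n/\varepsilon$; for a large component it groups the BFS runs that start inside it, charges $O(j/\varepsilon)$ to a run discovering $j$ vertices (invoking, in one sentence and without detailed justification, that each such vertex is scanned for at most $1/\varepsilon+1$ neighbors), and notes the $j$'s sum to $|C|$. You instead establish a single uniform per-vertex bound: scanning a dequeued vertex's adjacency list touches only $O(1/\varepsilon)$ edges, because at most one inspection can hit a previously-visited vertex (which halts the run), at most $1/\varepsilon+1$ hit fresh vertices (capped by the run counter), and, by simplicity of $G_0$, at most $O(1/\varepsilon)$ hit vertices already in the current run's discovered set (which never exceeds $1/\varepsilon+2$ elements). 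Since each vertex is dequeued at most once over the whole execution, $O(n/\varepsilon)$ follows immediately, with the small-component case absorbed as a trivial special instance rather than argued separately. Your route thus promotes the ``$O(1/\varepsilon)$ neighbors per scan'' observation — the actual engine of the bound, which the paper uses only implicitly — to an explicit, carefully justified claim, and it avoids the case split. You also correctly flag the implementation point that per-run state must be reset in time proportional to the run's size rather than $n$, a detail the paper leaves unaddressed.
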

		\begin{proof}
			Note that it suffices to bound the time of {\em exploring} each CC $C$, i.e., until all the vertices inside $C$ have been marked as visited. Note that $\cnt_\ell$ is exactly the number of CCs of size $\ell$, for $\ell\leq 1/\varepsilon$ and consider two cases, which together show the $O(n/\varepsilon)$ bound.
			\begin{enumerate}
				\item If $|C|=\ell \leq \frac{1}{\varepsilon}$, then the total time for exploring $C$ is $O(\ell^2)$. In this case, we note that the total time for exploring CCs of size at most $1/\varepsilon$ is $\sum_{\ell=1}^{1/\varepsilon} \cnt_\ell\cdot O(\ell^2)\leq \sum_{\ell=1}^{1/\varepsilon} \cnt_\ell\cdot \ell \cdot O(1/\varepsilon)= O(n/\varepsilon)$, where the last equation follows from the fact that $\sum_{\ell=1}^{1/\varepsilon}\cnt_\ell\cdot \ell\leq n$. 
				
				\item If $|C|>1/\varepsilon$, let $S=\{v_1,v_2,\cdots, v_b\}$ denote the set of vertices from which we start a BFS in $C$. Note that for each $i\leq b$, the number of newly discovered vertices from vertex $v_i$ is at most $1/\varepsilon+1$ by the description of our algorithm. 
				Let $t_j$ denote the number of vertices in $S$ from which the BFS discovers exactly $j$ new vertices, for each $j\leq 1/\varepsilon+1$. Then $|C|=\sum_{j=1}^{1/\varepsilon+1} t_j\cdot j$. 
				Furthermore, we note that for each $j\geq 1$, it takes time $O(j\cdot \frac{1}{\varepsilon})$ for the BFS to discover exactly $j$ new vertices, as we will only scan at most $\frac{1}{\varepsilon}+1$ neighbors for each of these new vertices. 
				Thus, the total time of exploring $C$ is  $\sum_{j=1}^{1/\varepsilon+1} t_j\cdot O(j\cdot \frac{1}{\varepsilon})\leq O(1/\varepsilon) \cdot \sum_{j=1}^{1/\varepsilon+1} t_j\cdot j =O(|C|/\varepsilon)$. 
				Thus, the total time of exploring CCs of size at least $1/\varepsilon+1$ is $\sum_{C: |C|\geq 1/\varepsilon+1} O(|C|/\varepsilon)=O(n/\varepsilon)$, where the last equation follows from the fact that $\sum_{C:|C|\geq 1/\varepsilon + 1} |C| \leq n$.  	
			\end{enumerate}
		\end{proof}

		\paragraph{Handling edge updates} Now we give the details of our dynamic algorithm \textsc{DeterDynamicNCC}($G,\varepsilon$) %
		for updating the counter $\overline{c}$ by running a limited BFS from the two endpoints of the updated edge in the graph before and after the update. %
		\begin{figure}[h!]
			\begin{center}
				\begin{tabular}{|p{\textwidth}|}
					\hline
					\textsc{DeterDynamicNCC}($G,\varepsilon$) $\triangleright$			\textbf{Maintaining an estimator for $\ncc(G)$ of a dynamic graph $G$}
					\begin{enumerate}	
						\item Preprocessing: run the algorithm 		\textsc{StaticNCC}($G_0,\varepsilon$) to find the $\overline{c}$, the number of CCs of $G_0$ of size at most $1/\varepsilon$.

						\item Handling an edge insertion $(u,v)$: perform three BFS calls: two from $u$ and $v$, respectively, in the graph before the insertion of $(u,v)$, and one from $u$ in the graph after 
						the insertion. 
						Stop the BFS once $1/\varepsilon+1$ vertices have been reached or no more new vertices can be reached. 
						Let $s_u^{(0)},s_v^{(0)},s_u^{(1)}$ denote the sizes of the corresponding explored subgraphs.		 
						\begin{enumerate}
							\item\label{alg:one_large} If exactly one of $s_u^{(0)}$ and $s_v^{(0)}$, say $s_u^{(0)}$, is no larger than $1/\varepsilon$, then decrement $\overline{c}$ by $1$.
							
							\item\label{alg:two_small} If both of $s_u^{(0)},s_v^{(0)}$ are smaller than $1/\varepsilon$:
							\begin{enumerate}
								\item\label{alg:two_small_1}  if $s_u^{(1)}$ is larger than $1/\varepsilon$, then decrement $\overline{c}$ by $2$; 
								\item\label{alg:two_small_2} if $s_u^{(1)}$ is no larger than $1/\varepsilon$ and $s_u^{(1)}=s_u^{(0)}$, then keep $\overline{c}$ unchanged;
								
								\item\label{alg:two_small_3} if $s_u^{(1)}$ is no larger than $1/\varepsilon$ and $s_u^{(1)}\neq s_u^{(0)}$, then decrement $\overline{c}$ by $1$.
							\end{enumerate}
						\end{enumerate} %
						\item Handling an edge deletion $(u,v)$: perform three BFS calls: one from $u$ in the graph before the deletion of $(u,v)$, and two from $u$ and $v$, respectively, in the graph after 
						the deletion. 
						Stop the BFS once $1/\varepsilon+1$ vertices have been reached or no more new vertices can be reached. 
						Let $s_u^{(0)},s_u^{(1)},s_v^{(1)}$ denote the sizes of the corresponding explored subgraphs.	
						\begin{enumerate}
							\item If exactly one of $s_u^{(1)}$ and $s_v^{(1)}$, say $s_u^{(1)}$, is no larger than $1/\varepsilon$, then increment $\overline{c}$ by $1$.
							
							\item If both of $s_u^{(1)},s_v^{(1)}$ are smaller than $1/\varepsilon$:
							\begin{enumerate}
								\item  if $s_u^{(0)}$ is larger than $1/\varepsilon$, then increment $\overline{c}$ by $2$; 
								\item if $s_u^{(0)}$ is no larger than $1/\varepsilon$ and $s_u^{(0)}=s_u^{(0)}$, then keep $\overline{c}$ unchanged;
								
								\item if $s_u^{(0)}$ is no larger than $1/\varepsilon$ and $s_u^{(0)}\neq s_u^{(1)}$, then increment $\overline{c}$ by $1$.
							\end{enumerate}
						\end{enumerate}
					\end{enumerate}		
					\\ \hline
				\end{tabular}
			\end{center}

		\end{figure}
		{\bf Correctness.} As aforementioned, it suffices to show that throughout all the updates, $\bar{c}=\nscc(G)$, where $G$ is the current graph. We give details as follows.
		
		For any edge insertion $(u,v)$, we know that the number $\nscc$ (of CCs of size at most $1/\varepsilon$) can change by at most $2$. More precisely, it changes if and only if at least one of $s_u^{(0)},s_v^{(0)}$ is at most $1/\varepsilon$ and $u,v$ do not belong to the same CC before the edge insertion, where $s_u^{(0)}$ and $s_v^{(0)}$ are the sizes of the explored subgraphs (before the edge insertion) starting from $u$ and $v$, respectively, that we compute in the algorithm. 
		Furthermore, if Step~\ref{alg:one_large} happens, i.e., exactly one of $s_u^{(0)}$ and $s_v^{(0)}$, say $s_u^{(0)}$, is no larger than $1/\varepsilon$, then a small CC merges into a large CC, and thus $\nscc$ decreases by $1$. If Step~\ref{alg:two_small} happens (i.e., $s_u^{(0)},s_v^{(0)}$ are smaller than $1/\varepsilon$): 
		if Step~\ref{alg:two_small_1} happens, i.e., $s_u^{(1)}$ is larger than $1/\varepsilon$, then two small CCs merge into a CC of size larger than $1/\varepsilon$ and thus $\nscc$ decreases by $2$; if Step~\ref{alg:two_small_2} happens, i.e., $s_u^{(1)}$ is no larger than $1/\varepsilon$ and $s_u^{(1)}=s_u^{(0)}$, then $u,v$ belong to the same CC before $(u,v)$ was inserted and thus $\nscc$ remains unchanged; if Step~\ref{alg:two_small_3} happens, i.e., $s_u^{(1)}$ is no larger than $1/\varepsilon$ and $s_u^{(1)}\neq s_u^{(0)}$, then two small CCs merge into a CC of size no larger than $1/\varepsilon$ and thus $\nscc$ decreases by $1$.
		By the description of our algorithm, after the insertion $(u,v)$, the maintained $\overline{c}$ still satisfies that $\overline{c}=\nscc(G')$, where $G'$ is the updated graph. 
		
		The case for edge deletions can be analyzed similarly.

		{\bf Running time.} Now we analyze the running time of our dynamic algorithm. %
		We note that for each update (either insertion or deletion), we only need to execute $O(1)$ BFS calls, each of which will explore at most $O(1/\varepsilon)$ vertices (and thus $O(1/\varepsilon^2)$ edges). Thus, the worst-case update time is $O(1/\varepsilon^2)$.
		
		Theorem \ref{thm:deterministic_NCC} now follows from the above analysis of the static algorithm \textsc{StaticNCC} and the dynamic algorithm \textsc{DeterDynamicNCC}. 
	\end{proof}

	\section{A Randomized Dynamic Algorithm}\label{subsec:randomized}

	Now we present our randomized dynamic algorithm $\mathcal{A}$ for estimating the weight of an MSF of $G$ and prove Theorem \ref{thm:random_WMST}. To do so, as  for the previous deterministic algorithm, we give dynamic algorithms $\mathcal{B}_j$ that approximate the number of CCs of subgraphs $G^{(j)}$ (as defined before) of $G$ within an additive error $O(\varepsilon \nis(G)/W)$ with small worst-case update time. 

	We first consider a dynamic algorithm whose input is a dynamic graph $H$ and an additional (integer) parameter $T$ that is also changing along with each edge update. That is, starting  with an initial graph and parameter, at each timestamp $i\geq 1$, the graph $H$ is updated by one edge insertion or deletion, and the parameter $T$ is updated to a new value. We give such a dynamic algorithm for estimating the number of CCs of $H$ with an additive error parametrized by $T$.} We will show the following theorem.

\begin{restatable}{theorem}{randomNCC}\label{thm:random_NCC}
	Let $\varepsilon', p\in (0,1)$. Consider a dynamic algorithm that takes as input a dynamic graph $H$ and an additional (integer) parameter $T$, called $T$-parameter, that is also changing along with each edge update. Suppose that (a) $\Thr\geq \nis(H)$ throughout all the updates and (b) the $T$-parameters at timestamps $i$ and $i-1$ differ by at most $2$, for $i\geq 1$. 
	Then there exists a fully dynamic algorithm {\textsc{RandDynamicNCC}($H$, $\Thr$, $\varepsilon'$, $p$)} that with probability at least $1-p$, maintains an estimator $\overline{cc}$ for the number $\ncc$ of CCs of a dynamic graph $H$ such that  $|\overline{cc}-\ncc(H)|\leq \varepsilon'\cdot \Thr$. The worst-case time per update operation is $O(\max\{1,{\log(1/\varepsilon')\log(1/p)}/(\varepsilon'^3 \Thr^{*})\})$, where $\Thr^{*}$ is the minimum of the $T$-parameters over all updates. Our algorithm works against an adaptive adversary.
\end{restatable}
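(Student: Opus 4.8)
The plan is to reduce to a suitable \emph{static} estimator for the number of connected components and then apply the standard ``rebuild in the background'' transformation, with two twists that are forced by the statement: the static estimator must have additive error proportional to $\nis(H)$ (hence to $\Thr$), not to $n$; and the rebuild period must adapt to the current value of the $T$-parameter so that the worst-case update time comes out in terms of $\Thr^{*}$.

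First I would write $\ncc(H)=(n-\nis(H))+\mathrm{nt}(H)$, where $\mathrm{nt}(H)$ is the number of connected components of $H$ of size at least $2$; note $\mathrm{nt}(H)\le\nis(H)/2\le\Thr/2$. The term $n-\nis(H)$, i.e.\ the number of isolated vertices, is maintained \emph{exactly} in $O(1)$ time per update by storing the degree vector of $H$ in the data structure of Lemma~\ref{lemma:samplenonzero}, which after each entry update reports the number of non-zero entries (equal to $\nis(H)$) and returns a uniformly random non-zero entry (a uniformly random non-isolated vertex of $H$) in $O(1)$ time. Hence it suffices to maintain an estimate $\widehat{\mathrm{nt}}$ with $|\widehat{\mathrm{nt}}-\mathrm{nt}(H)|\le\varepsilon'\Thr$ and output $\overline{cc}:=n-\nis(H)+\widehat{\mathrm{nt}}$.

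For the static estimate I would use the CRT/BKM identity $\mathrm{nt}(H)=\sum_{v:\,\deg_H(v)\ge1}1/|C(v)|$, where $C(v)$ is the component of $v$, and adapt the estimator of \cite{CRT05:MST,BKM14:numcc} so that it draws its samples \emph{among the non-isolated vertices} (via Lemma~\ref{lemma:samplenonzero}) and scores a component only when it has size $\ge2$: draw $s=\Theta(\log(1/p)/\varepsilon'^2)$ uniformly random non-isolated vertices; from each run a truncated exploration of its component that stops once $\Theta(1/\varepsilon')$ vertices have been seen; give score $1/|C(v)|$ if the whole component was explored and has size in $[2,\Theta(1/\varepsilon')]$, else score $0$; and output $\nis(H)/s$ times the sum of the scores. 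Truncating the component sizes biases the expectation down by at most $\varepsilon'\nis(H)/8$; a Hoeffding bound over the $s$ summands, each in $[0,\nis(H)/s]$, bounds the sampling deviation by $\varepsilon'\nis(H)/8$ with probability $\ge1-p$; and, by the cited running-time bound for the exploration, one run costs $T_{\mathrm{static}}=O(\log(1/\varepsilon')\log(1/p)/\varepsilon'^2)$ time. So a single run gives $|\widehat{\mathrm{nt}}-\mathrm{nt}(H)|\le\varepsilon'\nis(H)/4\le\varepsilon'\Thr/4$ with probability $\ge1-p$.

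Finally I would run such static estimators back to back, each spread evenly over a batch of $\Delta$ consecutive updates (performing $O(\lceil T_{\mathrm{static}}/\Delta\rceil)$ units of work at each update), refreshing $\widehat{\mathrm{nt}}$ each time a run completes, where at the start of a run I set $\Delta:=\max\{1,\lfloor\varepsilon'\Thr/c\rfloor\}$ for a sufficiently large constant $c$. By hypothesis (b) the $T$-parameter moves by at most $2$ per step, so it stays within a constant factor of its value at the start of the current batch; hence $\Delta=\Theta(\varepsilon'\Thr^{*})$, and at most $O(\Delta)$ updates separate any graph snapshot used by the currently installed run from the present timestamp. Since each update changes $\ncc(H)$, $\mathrm{nt}(H)$ and $\nis(H)$ each by $O(1)$, this drift and staleness adds at most $O(\Delta)\le\varepsilon'\Thr/4$ to the error, so $|\overline{cc}-\ncc(H)|\le\varepsilon'\Thr$ at every timestamp with probability $\ge1-p$, and the worst-case update time is $O(\lceil T_{\mathrm{static}}/\Delta\rceil)=O(\max\{1,\log(1/\varepsilon')\log(1/p)/(\varepsilon'^3\Thr^{*})\})$. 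The algorithm is correct against an adaptive adversary: each run uses fresh random bits independent of all earlier ones and its error bound holds for \emph{every} graph and update sequence, while the estimate visible at any timestamp is a deterministic function of the exactly-maintained $\nis(H)$ and of one completed run's output, so an adversary choosing the next update after seeing past answers gains nothing. I expect the main obstacle to be the first twist --- forcing the additive error to scale with $\nis(H)$ rather than $n$ while keeping the per-update work $O(1)$-amortizable, which is exactly why constant-time uniform sampling of non-isolated vertices (Lemma~\ref{lemma:samplenonzero}) is essential --- together with the bookkeeping needed to turn the adaptive batch length $\Theta(\varepsilon'\Thr)$ into a worst-case guarantee phrased via the global minimum $\Thr^{*}$, while keeping the drift during a run inside the error budget.
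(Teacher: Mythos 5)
Your proposal follows essentially the same route as the paper: maintain $\nis(H)$ exactly and sample non-isolated vertices in $O(1)$ time via Lemma~\ref{lemma:samplenonzero}; decompose $\ncc(H)=(n-\nis(H))+\ncc(H_{\nis})$ (your $\mathrm{nt}(H)$ is exactly what the paper calls $\ncc(H_{\nis})$); estimate the second term by running a CC-estimator restricted to non-isolated vertices, so the additive error scales with $\nis(H)\le\Thr$; and de-amortize by spreading a fresh static run over a batch whose length is set adaptively to $\Theta(\varepsilon'\Thr)$ at the start of the batch, with drift controlled by hypothesis (b) and the fact that each update moves $\ncc$, $\mathrm{nt}$, $\nis$ by $O(1)$. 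Your adaptivity argument (fresh random bits per run plus worst-case drift) is also the paper's.

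There is one concrete imprecision you should fix. You \emph{describe} a simple static estimator --- sample $s=\Theta(\log(1/p)/\varepsilon'^2)$ non-isolated vertices, run a BFS that is deterministically truncated at $\Theta(1/\varepsilon')$ vertices, and score $1/|C(v)|$ --- but then \emph{attribute} to it the running time $T_{\mathrm{static}}=O(\log(1/\varepsilon')\log(1/p)/\varepsilon'^2)$. That bound is not what the estimator you wrote down achieves: fully exploring a component on up to $1/\varepsilon'$ vertices can cost $\Theta(1/\varepsilon'^2)$ time (it may have $\Theta(1/\varepsilon'^2)$ edges), so your estimator runs in $O(\log(1/p)/\varepsilon'^4)$ time, which would degrade the final per-update bound to $O(\log(1/p)/(\varepsilon'^5\Thr^{*}))$. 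The $O(\log(1/\varepsilon')\log(1/p)/\varepsilon'^2)$ bound requires the actual estimator of~\cite{BKM14:numcc} (with its geometric truncation and variance-reduced scoring), not the deterministic-threshold simplification you sketched. The paper avoids this by invoking the BKM algorithm as a black box (Lemma~\ref{lemma:bkm_ncc}) and merely replacing its uniform vertex sampling with non-isolated vertex sampling to get Lemma~\ref{lemma:bkm_ncc_nis}, exactly preserving the error bound $\varepsilon\cdot\nis(H)$ and the stated running time. You should do the same: cite and reuse the BKM estimator unchanged except for the sampling primitive, rather than re-deriving a weaker estimator and then claiming the stronger bound for it. With that fix, the rest of your argument (Hoeffding, bias from truncation, drift over the batch, the $\Psi\le(1+\varepsilon')\Thr_i$-type comparison between the batch parameter and the current $T$-parameter, and the adaptive-adversary reasoning) goes through and matches the paper's proof.
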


We defer the proof of Theorem \ref{thm:random_NCC} to the end of this section. To approximate the weight of an MSF of a dynamic graph $G$, our algorithm $\mathcal{B}_j$ for approximating the number of CCs of $G^{(j)}$ is obtained by invoking the algorithm \textsc{RandDynamicNCC} from Theorem \ref{thm:random_NCC} on $G^{(j)}$ with the $T$-parameter being specified to be the number $\nis(G)$ of  non-isolated vertices in $G$. 
Note that by our choice of $T$-parameter and choosing $p=1/\poly(n)$, the algorithm $\mathcal{B}_j$ approximates the number of CCs of each subgraph $G^{(j)}$ within an additive $\varepsilon' \nis(G)$, with worst-case $O(\max\{1,\frac{\log(1/\varepsilon')\log(n)}{\varepsilon'^3 \nis^{*}}\})$ time per update operation, with high probability, where $\nis^*$ is the minimum number of non-isolated vertices in $G$ throughout all the updates. The running time can be further guaranteed to be $O(\max\{1,{\log(1/\varepsilon')\log(n)}/(\varepsilon'^3 \sqrt{m^*})\})$ as it holds that $\nis^*>\sqrt{2m^*}$ (a fact that is proven in the proof of Theorem \ref{thm:random_WMST}).

By combining the dynamic algorithm for the number of CCs given in Theorem~\ref{thm:random_NCC} and the relation of the weight of MSF of a graph $G$ and the number of CCs in its subgraphs as given in Lemma~\ref{lem:mst_ncc}, we are ready to present our randomized algorithm \textsc{RandEstWMSF} for the weight of an MST of $G$. \pnew{For each $1\leq j\leq r=\lceil\log_{1+{\varepsilon/2}} W\rceil$, it first initializes a data structure  $\mathcal{B}_j$ for maintaining an estimator $\overline{c}_j$ of the number of CCs of each $G^{(j)}$ using the initial graph $G_0$, $\ncc(G_0^{(j)})$,   $T=\nis(G_0)$, and corresponding parameters $\varepsilon'=\varepsilon/(24W),p=p'/r$. Then for each update in $G$, the algorithm first updates each $G^{(j)}$ accordingly and then invoke $\mathcal{B}_j$ with $T$-parameter $\mathrm{nis}(G)$ to update $\overline{c}_j$. To guarantee that the $T$-parameter for $G^{(j)}$ fulfills the precondition of Theorem \ref{thm:random_NCC}, we introduce the use of self-loops to  ``synchronize'' the updates in $G$ and $G^{(j)}$, for each $j\leq r$. That is, for each update on the graph $G$, it is guaranteed that there will be one or two corresponding updates on each $G^{(j)}$, for each $j\leq r$ (see Footnote \ref{footnote:selfloop} for more explanations). The algorithm  \textsc{RandEstWMSF} is formally described in the following table. }

\begin{table}[h]

\begin{center}
	
	\begin{tabular}{|p{\textwidth}|}
		\hline
		\textsc{RandEstWMSF}($G,\varepsilon, p',W$) $\triangleright$ \textbf{Dynamically Maintaining an estimator $\overline{M}$ of the weight of an MSF}%
		\begin{enumerate}
			\item Let $r=\lceil\log_{1+{\varepsilon/2}} W\rceil$, $\ell_j=(1+{\varepsilon/2})^j$, and $\lambda_j=(1+{\varepsilon/2})^{j+1}-(1+{\varepsilon/2})^j$.
			\item Let $G^{(j)}$ denote the subgraph of the current graph $G$ spanned by all edges with weights at most $\ell_j$. Let $c^{(j)}$ be the number of CCs of $G^{(j)}$.
			\item \pnew{For each $1\leq j\leq r$, initialize an estimator $\overline{c}_j$ for $c^{(j)}$ to be $\overline{c}_j=\ncc(G_0^{(j)})$, where $G_0$ is the initial graph of $G$; initialize a data structure $\mathcal{B}_j$ for  \textsc{RandDynamicNCC}($H$, $T$, $\varepsilon'$, $p$) from Theorem~\ref{thm:random_NCC} with $p={p' }/{r}$, $\varepsilon' = {\varepsilon}/{(24W)}$, $H=G_{0}^{(j)}$ and $T=\nis(G_0)$. }
			\item \pnew{For each  $i\geq 1$ and the $i$-th update in $G$: }
			
			\pnew{for each $j$ with $1\leq j\leq r$:}
			\begin{enumerate}
				\item\label{alg:good} if the $i$-th update in $G$ leads to an update in $G^{(j)}$, perform the same update to $H$ in $\mathcal{B}_j$, and set the corresponding $T$-parameter to be $\nis(G)$, where $G$ is the graph after the $i$-th update; 
				\item if the $i$-th update with edge endpoints $u,v$ in $G$ does not lead to an update in $G^{(j)}$, perform two consecutive updates to $H$ in $\mathcal{B}_j$: %
				\begin{enumerate}
					\item\label{alg:insert} insert self-loop\footnote{\label{footnote:selfloop}
						\pnew{Note that if we do not introduce the self-loops, then it is not the case that each edge update in $G$ leads to an update in every graph $G^{(j)}$. In particular, if the updated edge of $G$ has weight larger than $\ell_j$, then the graphs $G^{(i)}$ with $\ell_i<\ell_j$ stay the same, i.e., they are \emph{not} updated. This further leads to the issue that when one edge update occurs on $G^{(i)}$, the $T$-parameter for $G^{(i)}$ may have changed by more than $2$, i.e., the precondition of Theorem \ref{thm:random_NCC} is no longer fulfilled. This could happen if before this update in $G^{(i)}$, $G$ was updated many times (where each updated edge had weight more than $\ell_i$) and thus the $T$ parameter (i.e., $\nis(G)$) of $G^{(i)}$  has changed by more than $2$.}} $(u,u)$ to $H$, and set the corresponding $T$-parameter to be $\nis(G')$, where $G'$ is the union of the graph $G$ after the $i$-th update together with the self-loop $(u,u)$;
					\item\label{alg:delete} delete self-loop $(u,u)$ from $H$, and set the corresponding $T$-parameter to be $\nis(G)$, where $G$ is the graph after the $i$-th update. 
					
				\end{enumerate} 
				\item \pnew{update  $\mathcal{B}_j$ and $\overline{c}_j$ according to the above edge update(s) in $H$.}
			\end{enumerate}
			\item Define $\overline{M}:=n-\overline{c}_r\cdot (1+{\varepsilon}/{2})^r+\sum_{j=0}^{r-1}\lambda_j\cdot \overline{c}_j$.
		\end{enumerate}
		
		\\ \hline
	\end{tabular}
\end{center}
\end{table}
Now we restate Theorem \ref{thm:random_WMST} and give its proof.

\randomWMSF*
\begin{proof}

	Let $p'$ be a parameter that will be specified later. Let $r,\ell_j, \lambda_j, G^{(j)}, c^{(j)}, \overline{M}$ be defined as in the algorithm \textsc{RandEstWMSF}($G$, $\varepsilon$, $p'$,$W$). Note that $\nis(G)\ge \nis(G^{(j)})$, since $G^{(j)}$ is a subgraph of $G$ for any $1 \le j \le r$.
	Since $G$ is simple, we know that $M\geq {\nis(G)}/{2}$, as each non-isolated vertex is incident
	to at least one edge (of weight at least $1$) of any MSF.

	Note that $\overline{c}_j$ is the estimator of $c^{(j)}$ obtained by invoking the dynamic algorithm in Theorem~\ref{thm:random_NCC} with $p={p' }/{r}$, $\varepsilon' = {\varepsilon}/{(24W)}$ and the above specified dynamic graph $H$ and $T$-parameter. 
	As each update in $G$ changes $\nis(G)$ by at most $2$, $T$-parameter changes by at most $2$ for each update in $G^{(j)}$ (and thus fulfills requirement (b) of Theorem~\ref{thm:random_NCC}) {\em if} we guarantee that each update corresponding to edge $(u,v)$ in $G$ leads to at least one update in
	$G^{(j)}$. %
	In the case that the weight of $(u,v)$ is at most $\ell_j$, then 
	$(u,v)$ belongs to $G^{(j)}$, and, thus, the requirement is naturally fulfilled.

	If the weight of $(u,v)$ is larger than $\ell_j$, we execute two
	operations that cancel each other. Specifically we insert a self-loop $(u,u)$ at Step \ref{alg:insert} and then immediately delete it at Step \ref{alg:delete}, so that the graph $G^{(j)}$
	remains correct, but the $T$-parameters in these two steps also fulfill requirements (a) and (b). 
	More precisely, after adding the self-loop $(u,u)$, it holds that the $T$-parameter is at least $\nis(H)$ as $H$ is the number of non-isolated vertices of a subgraph of $G$ with the self-loop $(u,u)$; furthermore, the change of the $T$-parameter is at most $1$ as the self-loop can change the number of non-isolated vertices by at most $1$, so 
	the requirement (b) holds for this operation. After deleting the self-loop $(u,u)$, the requirement (a) is still guaranteed, as $H$ will be exactly $G^{(j)}$, which implies that the corresponding $T$-parameter is at least $\nis(G)\geq \nis(H)$; the requirement (b) still holds as deleting one self-loop can change the $T$-parameter by at most $1$. 

	For simplicity, we slightly abuse the notation by letting $T_i$ denote the specified $T$-parameter in the algorithm corresponding to the $i$-th update of $G$ (i.e., it corresponds to the $T$-parameter in either Step \ref{alg:good}, \ref{alg:insert} or \ref{alg:delete}).  By~Theorem~\ref{thm:random_NCC} and the fact that $\ncc(H)=\ncc(G^{(j)})$ (as $H$ is either $G^{(j)}$ itself or 
	$G^{(j)}$ with an additional self-loop $(u,u)$, for some vertex $u$), the algorithm computes an estimator $\overline{c}_j$ for $c^{(j)}$ such that with probability $1-{p' }/{r}$, it holds that 
	\[\abs{\overline{c}_j-c^{(j)}}\leq \varepsilon'\cdot \Thr_i \leq {\varepsilon}/{(24W)} \cdot (\nis(G)+1)\leq {\varepsilon}/{(12W)} \cdot \nis(G),\] %
	where the second inequality holds as $T_i\leq \nis(G)+1$ throughout the update, and the last inequality follows as $\nis(G)\geq \nis^*\geq 1$.

	Thus with probability at least $1-r\cdot \frac{p'}{r}=1-p'$, the above inequality holds for all $j$ such that $1\leq j\leq r$. Assume in the following that this event holds. Let $X=n-c^{(r)}\cdot (1+{\varepsilon}/{2})^r+\sum_{j=0}^{r-1}\lambda_i\cdot c^{(j)}$. By the same calculation as in the proof of Theorem \ref{thm:deterministic_WMST}, we have that 
	\[ |\overline{M}-X| \leq  {\varepsilon \nis(G)}/{4}\leq {\varepsilon M}/{2} \]
	
	Now by Lemma \ref{lem:mst_ncc} with approximation parameter ${\varepsilon}/{2}$, it holds that $M\leq X\leq (1+\varepsilon/2)M$. Thus $\overline{M}$ is a $(1+\varepsilon)$-approximation of $M$.

	Now note that for each $j$, it always holds that for the $i$-th update in $G$ satisfies that $T_i\geq \nis(G)\geq \nis^*$, and thus the worst-case time spent per update for computing $\overline{c}_j$ is
	\[O(\max\{1,{\log(1/\varepsilon')\log(1/p)}/(\varepsilon'^3 \nis^{*})\})=O(1+ {W^3\log(W/\varepsilon)\log(r/p')}/(\varepsilon^3 \nis^*)).
	\] %
	In total, the worst-case time per update operation of \textsc{RandEstWMSF}($G$, $\varepsilon$, $p'$,$W$) is 
	\begin{align*}
		&O(\sum_{j=1}^{r}(1+ {W^3\log(W/\varepsilon )\log(r/p')}/({\varepsilon^3 \nis^*}) )) \\
		=&\pnew{O((\log W)/\varepsilon+{W^3\log W\log({W}/{\varepsilon} )\log({\log W}/(\varepsilon p'))}/(\varepsilon^4 \nis^*))}.
	\end{align*}

	Now we note that $\nis^*> \sqrt{2 m^*}$, where $m^*$ is the minimum number of edges of the graph throughout all the updates. This is true as for the graph $G^*$ with minimum non-isolated vertices, i.e., $\nis(G^*)=\nis^*$, the total number of edges in $G^*$ is both at most $(\nis^*)(\nis^*-1)/2<(\nis^*)^2/2$ and also at least $m^*$.  
	
	Therefore, by setting $p'=1/n^c$ for any constant $c>1$, we can guarantee that the worst-case update time of the algorithm \textsc{RandEstWMSF}($G$, $\varepsilon$, $p'$,$W$) is 
	\[\pnew{O((\log W)/\varepsilon+{W^3(\log W)\log({W}/{\varepsilon} )(\log({\log W}/\varepsilon) +\log n)}/({\varepsilon^4 \sqrt{m^*}})),}\]
	and the algorithm succeeds with probability at least $1-p'=1-1/n^c$.

	The algorithm works against an adaptive adversary as each of the algorithms from Theorem~\ref{thm:random_NCC} works against an adaptive adversary, and the MSF algorithm
	returns only a weighted sum of the values returned by each of these algorithms.
\end{proof}

\subsection{Proof of Theorem \ref{thm:random_NCC}}\label{sec:dynamic_thr}
In the following, we present the proof of Theorem \ref{thm:random_NCC}. We use $H$ to denote the considered graph. %
\randomNCC*
\subsubsection{A data structure for dynamically sampling non-zero entries}\label{sec:nonzero}
We first give a data structure to maintain some data structures so that the algorithm can perform \emph{non-isolated vertex-sample queries}, i.e., to sample a vertex from the set of all non-isolated vertices uniformly at random \emph{at any time}. Our dynamic algorithm for estimating the weight of an MSF will make use of this data structure.

We first present a more general data structure to sample non-zero entries from a dynamic array and then show how to use it to support non-isolated vertex-sample queries. Given a set $V$ of  $n$ elements (here vertices), numbered from $0$ to $n-1$, each element $u$ with an associated number $d_u$ (here degree), we show how to support the following operations in constant time with preprocessing time $O(n)$:
\begin{enumerate}
	\item[--]{\bf Update($u,\delta$):} add $\delta$ to $d_u$, where $\delta$ can be positive or negative.
	\item[--]{\bf Non-zero sample():} return an element that is chosen uniformly at random from all elements $u$ with $d_u \not = 0$.
\end{enumerate}
\begin{lemma}\label{lemma:samplenonzero}
	There exists a data structure that supports {\bf Non-zero sample()} queries to a fully dynamic array with constant update time.
\end{lemma}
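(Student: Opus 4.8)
The plan is to maintain, alongside the array $d_0,\dots,d_{n-1}$, a compact list of exactly those indices whose entry is currently non-zero, together with a reverse pointer array so that an index can be located in (and removed from) that list in constant time. Concretely I would keep: (i) the array $d[\cdot]$ itself; (ii) an array $\texttt{pos}[0..n-1]$ where $\texttt{pos}[u]$ is the position of $u$ in the support list if $d_u\neq 0$ and $\texttt{pos}[u]=\bot$ otherwise; (iii) a dynamic array (i.e.\ a resizable array with $O(1)$ amortized, or with standard doubling/halving de-amortized to $O(1)$ worst-case, append/pop-from-end) $\texttt{supp}[0..k-1]$ listing the $k$ current non-zero indices in arbitrary order; and (iv) a counter $k=|\supp|$. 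Preprocessing scans the input once in $O(n)$ time to build $d[\cdot]$, fill $\texttt{supp}$ with the indices that start non-zero, and set $\texttt{pos}$ accordingly.

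For $\textbf{Update}(u,\delta)$: record the old value $a=d_u$, set $d_u\leftarrow a+\delta$, and branch on the four sign-change cases. If $a=0$ and $d_u\neq 0$, append $u$ to $\texttt{supp}$ (at position $k$), set $\texttt{pos}[u]=k$, and increment $k$. If $a\neq 0$ and $d_u=0$, we must delete $u$ from $\texttt{supp}$: let $q=\texttt{pos}[u]$, let $w=\texttt{supp}[k-1]$ be the last element, move $w$ into slot $q$ (i.e.\ $\texttt{supp}[q]\leftarrow w$, $\texttt{pos}[w]\leftarrow q$), pop the last slot, decrement $k$, and set $\texttt{pos}[u]=\bot$. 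If $a=0,d_u=0$ or $a\neq 0,d_u\neq 0$, nothing changes in $\texttt{supp}$. Each case is $O(1)$. For $\textbf{Non-zero sample}()$: if $k=0$ return ``empty'' (or fail); otherwise draw $j$ uniformly from $\{0,\dots,k-1\}$ and return $\texttt{supp}[j]$. Since $\texttt{supp}$ contains each non-zero index exactly once, this returns a uniformly random non-zero index in $O(1)$ time.

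Correctness is the invariant ``$\texttt{supp}$ is a duplicate-free enumeration of $\{u:d_u\neq 0\}$ and $\texttt{pos}[u]$ is the position of $u$ in $\texttt{supp}$ when $d_u\neq 0$, else $\bot$'', which I would verify is preserved by each of the four update cases (the swap-with-last deletion is the only subtle point, and one must handle the corner case $u=w$, i.e.\ deleting the last element, where the move is vacuous). The space is $O(n)$ for the three arrays. The only real implementation concern — hence the main ``obstacle'' — is getting genuinely $O(1)$ \emph{worst-case} time for the growing/shrinking $\texttt{supp}$ array: a plain doubling array gives only amortized $O(1)$. This is handled by the standard de-amortization of a dynamic array (maintain two arrays during a resize and migrate a constant number of elements per operation), or simply by allocating $\texttt{supp}$ of fixed length $n$ up front, which is within the $O(n)$ space budget and makes append/pop trivially $O(1)$ worst-case; I would take the latter route for simplicity. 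This proves the lemma.
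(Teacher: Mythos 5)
Your proposal is correct and matches the paper's construction essentially verbatim: the paper also keeps a counter of non-zero entries, a length-$n$ array of the currently non-zero indices (your $\texttt{supp}$, its $\mathcal{A}$), and a reverse-index array (your $\texttt{pos}$, its $\mathcal{P}$), and handles deletion by the same swap-with-last trick. Your closing remark about allocating $\texttt{supp}$ of fixed length $n$ is exactly the choice the paper makes, so the amortization concern you raise does not even arise there.
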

\begin{proof}
	Let us call an element $u$ of $V$ with $d_u \not= 0$ a {\em non-zero} element.
	We implement the data structure by using two arrays and a counter:
	\begin{enumerate}
		\item Keep the number $\nis$ of non-zero elements of $V$.
		\item Keep an array $\mathcal{A}$ of size $n$, where only the first $\nis$  entries are used, such that (i) each entry in $\mathcal{A}$ stores a non-zero element $u$  together with $d_u$ and (ii) each non-zero element of $V$ is stored in $\mathcal{A}$ within the first $\nis$ entries.
		\item Keep an array $\mathcal{P}$ of size $n$, which has an entry for each element of $V$, such that if an element $u$ is stored in $\mathcal{A}[i]$ (i.e. $u$ is non-zero), then $\mathcal{P}[u] = i$; and if an element $u$ is not stored in $\mathcal{A}$ (i.e. $d_u = 0$), then $\mathcal{P}[u] = -1$. Thus $\mathcal{P}$ contains indices corresponding to the positions of elements in $\mathcal{A}$ or $-1$.
	\end{enumerate}
	During preprocessing we initialize both arrays, set all entries of $\mathcal{P}$ to $-1$, and set $\nis=0$. Then insert each $u$ whose initial value $d_u \not= 0$ by calling \textbf{Update}($u, d_u$).

	\subparagraph{Handling an \textbf{Update}($u, \delta$) operation.} Whenever an {\bf Update}($u, \delta$) operation is executed, we check if $\mathcal{P}[u] > -1$. 
	\begin{enumerate}
		\item
		If $\mathcal{P}[u] > -1$, then $u$ is stored in $\mathcal{A}$ and $\mathcal{P}[u]$ contains the index of $u$ in $\mathcal{A}$.
		Thus we add $\delta$ to $d_u$, which is retrieved and then stored in the entry $\mathcal{A}[\mathcal{P}[u]]$. If the resulting value $d_u \not= 0$, this completes the update operation.
		If, however, the resulting value $d_u = 0$, let $v$ be the element stored in $\mathcal{A}[\nis]$.
		We copy into
		$\mathcal{A}[\mathcal{P}[u]]$ all information of element $v$, which we retrieve from $\mathcal{A}[\nis]$. Then we set
		$\mathcal{P}[v] = \mathcal{P}[u]$, set $\mathcal{P}[u] = -1$, and decrement $\nis$.
		\item If $\mathcal{P}[u] =-1$, then we increment $\nis$ by 1, set $d_u = \delta$, store $u$ and $d_u$ in $\mathcal{A}[\nis]$, and set $\mathcal{P}[u] = \nis$.
	\end{enumerate}

	\subparagraph{Handling \textbf{Non-zero sample} operation.} To implement this operation, we pick a random integer $j$ between $0$ and $\nis-1$ and return the element from $\mathcal{A}[j]$.
\end{proof}

\textbf{Supporting non-isolated vertex-sample queries in dynamic graphs.} Next we show how to use the above data structure to support non-isolated vertex-sample query throughout all the updates.
Whenever an edge $(u,v)$ is inserted, for each $x \in \{u,v\}$, we call \textbf{Update}($x, 1$).
Whenever an edge $(u,v)$ is deleted, for each $x \in \{u,v\}$, we call \textbf{Update}($x, -1$).
To sample a non-isolated vertex, we call \textbf{Non-zero sample}().

\subsubsection{The dynamic algorithm}

\paragraph{A static algorithm for estimating the number of CCs}%
We will make use of a static algorithm, which we briefly describe here. Recall that $\ncc(H)$ denotes the number of CCs of a graph $H$. The following lemma was proven by Berenbrink et al.~\cite{BKM14:numcc} (which improves upon the result in~\cite{CRT05:MST}) and it gives a constant-time algorithm for estimating $\ncc(H)$. %
It is assumed that the algorithm can perform a \textit{vertex-sample query}, which allows it to sample (in constant time) a vertex uniformly at random from $V$, and can make queries to the adjacency list of the graph. Note that these two queries for accessing a static graph can be supported by maintaining an array of vertices and the adjacency list of the graph, respectively.%
\begin{lemma}[\cite{BKM14:numcc}]\label{lemma:bkm_ncc}
	Let $\varepsilon >0$ and $0<p<1$. Suppose the algorithm has access to the adjacency list of a graph $H$ and can perform vertex-sample queries. Then there exists an algorithm that with probability at least $1-p$, returns an estimate that approximates $\ncc(H)$ with an additive error $\varepsilon n$. The running time of the algorithm is $O(1/\varepsilon^2\log(1/\varepsilon)\log(1/p))$. 
\end{lemma}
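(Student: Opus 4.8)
The plan is to reprove the sublinear-time connected-components estimator of \cite{CRT05:MST}, invoking the sharper running-time analysis of \cite{BKM14:numcc} for the final bound. Write $n_v$ for the number of vertices in the component of $H$ containing $v$, set $t:=\lceil 2/\varepsilon\rceil$, and call a component \emph{small} if it has at most $t$ vertices. First I would reduce to counting small components: if $\nscc(H)$ denotes the number of small components, then $\ncc(H)-\nscc(H)$ equals the number of components of size more than $t$, which is at most $n/t\le \varepsilon n/2$; hence it suffices to estimate $\nscc(H)$ with additive error $\varepsilon n/2$. Second, I would rewrite $\nscc(H)=\sum_{v\in V} g(v)$, where $g(v):=1/n_v$ when $n_v\le t$ and $g(v):=0$ otherwise, since each small component of size $k$ contributes $k\cdot(1/k)=1$; note $0\le g(v)\le 1$ for every $v$.

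The estimator samples $v_1,\dots,v_s$ independently and uniformly from $V$ (using vertex-sample queries) and, for each $i$, runs a breadth-first search from $v_i$ using queries to the adjacency lists that halts as soon as either $t+1$ distinct vertices have been discovered or the component has been exhausted; in the second case the search has learned $n_{v_i}\le t$ together with its exact value, so it can output $g(v_i)$ exactly. The algorithm returns $\overline{cc}:=\tfrac{n}{s}\sum_{i=1}^{s} g(v_i)$. Since $g(v_i)\in[0,1]$ and $\E[g(v_i)]=\nscc(H)/n$, a Chernoff/Hoeffding bound shows that $s=\Theta(\varepsilon^{-2}\log(1/p))$ samples suffice to guarantee $|\overline{cc}-\nscc(H)|\le\varepsilon n/2$ with probability at least $1-p$, and combining this with the reduction above gives $|\overline{cc}-\ncc(H)|\le\varepsilon n$. (The dependence on $p$ can alternatively be isolated by running a constant-success-probability version with $\Theta(\varepsilon^{-2})$ samples $\Theta(\log(1/p))$ independent times and taking the median, which matches the shape of the claimed bound.)

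The delicate point, and the one I would take from \cite{BKM14:numcc} rather than redo, is bounding the running time of each truncated search. One clean observation suffices for a degree-free but crude bound: because $H$ is simple, while the search processes a vertex $u$ the number of scanned edges that lead to an already-discovered vertex is at most the current number of discovered vertices, which never exceeds $t$; since at most $t$ vertices are ever processed, each search costs $O(t^2)=O(\varepsilon^{-2})$ regardless of vertex degrees, already avoiding the average-degree factor of \cite{CRT05:MST} and yielding total time $O(\varepsilon^{-4}\log(1/p))$. Shaving this down to the stated $O(\varepsilon^{-2}\log(\varepsilon^{-1})\log(1/p))$ requires not exploring small-but-dense components in full: one only needs $\min\{n_v,t\}$ accurately enough for the truncation budget, which is exploited via an exponential search over the $O(\log\varepsilon^{-1})$ size scales $1,2,4,\dots,t$ whose per-scale exploration cost is amortized against the distribution of component sizes. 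Making this amortization tight across all $s$ samples is the main obstacle; this is exactly the analysis supplied by \cite{BKM14:numcc}, which I would cite.
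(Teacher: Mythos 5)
The paper does not prove Lemma~\ref{lemma:bkm_ncc}: it is stated as a cited result of~\cite{BKM14:numcc}, accompanied only by a one-sentence parenthetical (``simply samples $O(1/\varepsilon^2)$ vertices, performs a BFS starting from each sampled vertex (for a number of steps) and then makes decisions based on the explored subgraphs''), so there is no in-paper proof to compare against. Your reconstruction is nevertheless sound and worth commenting on. The reduction to small components, the reformulation $\nscc(H)=\sum_{v}g(v)$ with $g(v)=1/n_v$ on components of size $\le t$, the fact that $g\in[0,1]$ with $\E[g(v)]=\nscc(H)/n$ under uniform sampling, and the Hoeffding bound giving $s=\Theta(\varepsilon^{-2}\log(1/p))$ samples for additive error $\varepsilon n/2$ are all correct, and together they establish the error guarantee. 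Your degree-free $O(t^2)$ per-BFS bound via the simple-graph observation is a nice self-contained argument, and it yields a clean $O(\varepsilon^{-4}\log(1/p))$ running time without any appeal to the literature.

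Where you stop is at the claimed $O(\varepsilon^{-2}\log(1/\varepsilon)\log(1/p))$ bound, which you explicitly defer to~\cite{BKM14:numcc}; that deferral is exactly what the paper itself does, so it is not a gap. Do note, however, that the weaker $O(\varepsilon^{-4}\log(1/p))$ bound is not a free substitute here: Theorem~\ref{thm:random_NCC} and thence Theorem~\ref{thm:random_WMST} use the $\varepsilon'^{-2}\log(1/\varepsilon')$ per-invocation cost with $\varepsilon'=\Theta(\varepsilon/W)$, so replacing it by $\varepsilon'^{-4}$ would degrade the final update time by roughly a $W^2/\varepsilon^2$ factor. Also, your sketch of \emph{how} \cite{BKM14:numcc} achieves the saving (``exponential search over the $O(\log\varepsilon^{-1})$ size scales'') is more a plausible flavor than the actual mechanism, which randomizes the BFS truncation threshold so that most explorations are cheap; since you are citing rather than reproving that part, the imprecision does not affect correctness, but be careful not to present it as the argument if you ever need to expand this.
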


We remark that the algorithm in~\cite{BKM14:numcc} simply samples (uniformly at random) $O(1/\varepsilon^2)$ vertices,
performs a BFS starting from each sampled vertex (for a number of steps) and then makes decisions based on the explored subgraphs. %
{Let $H_{\nis}$ denote the subgraph induced by all non-isolated vertices in a graph $H$. 
	Note that if the algorithm is able to perform a {non-isolated vertex-sample} query, i.e., the algorithm can sample a vertex uniformly at random from the vertex set of $H_{\nis}$, then one can approximate the number $\ncc(H_{\nis})$ of CCs in the subgraph $H_{\nis}$ with an additive error $\varepsilon |V(H_{\nis})|=\varepsilon \nis(H)$, where $\nis(H)$ is the number of non-isolated vertices in $H$. This is true as we can simply treat $H[N]$ as the input graph in the algorithm from Lemma~\ref{lemma:bkm_ncc}. This also proves the following lemma, Lemma \ref{lemma:bkm_ncc_nis}, which will be invoked by our dynamic algorithm. %
	\begin{lemma}\label{lemma:bkm_ncc_nis}
		Let $\varepsilon >0$ and $0<p<1$. Suppose the algorithm has access to the adjacency list of a \emph{static} graph $H$ and can sample a vertex uniformly at random from the set of all non-isolated vertices in $H$. Then there exists an algorithm that with probability at least $1-p$, returns an estimate $\overline{b}$ that approximates $\ncc(H_{\nis})$ with an additive error $\varepsilon\cdot \nis(H)$. The running time of the algorithm is $O(1/\varepsilon^2\log(1/\varepsilon)\log(1/p))$. 
	\end{lemma}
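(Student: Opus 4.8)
The plan is to reduce directly to Lemma~\ref{lemma:bkm_ncc} by running its algorithm on the induced subgraph $H_{\nis}$ of all non-isolated vertices, treated as a stand-alone input graph. First I would record the trivial but essential identities: $H_{\nis}$ has exactly $\nis(H)$ vertices, and $\ncc(H_{\nis})$ is precisely the quantity we want to estimate, since the components of $H$ that $H_{\nis}$ discards are exactly the isolated vertices. Consequently the additive error guaranteed by Lemma~\ref{lemma:bkm_ncc}, which is $\varepsilon$ times the number of vertices of the input graph, specializes to $\varepsilon\cdot\nis(H)$, and the running time $O(\varepsilon^{-2}\log(1/\varepsilon)\log(1/p))$ carries over unchanged because it is independent of the vertex count.

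The step that actually requires an argument is showing that the two oracles demanded by Lemma~\ref{lemma:bkm_ncc} for the graph $H_{\nis}$ can be supplied from what we are given for $H$. A uniform vertex-sample query on $H_{\nis}$ is, by definition, a non-isolated vertex-sample query on $H$, so that oracle is available for free. For adjacency-list access, the key observation is that if $u$ is non-isolated then every neighbor of $u$ in $H$ is itself non-isolated (it is incident to the edge joining it to $u$); hence the adjacency list of $u$ in $H_{\nis}$ coincides verbatim with its adjacency list in $H$, and the given adjacency-list access to $H$ already answers adjacency queries in $H_{\nis}$. In particular, every BFS performed by the algorithm of Lemma~\ref{lemma:bkm_ncc} begins at a sampled (non-isolated) vertex and only follows edges, so it never leaves $H_{\nis}$ and only ever reads lists that agree with those of $H$. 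This oracle-equivalence check is the main (and essentially the only) obstacle; everything else is bookkeeping.

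With the oracles in place I would simply invoke the algorithm of Lemma~\ref{lemma:bkm_ncc} on $H_{\nis}$ with parameters $\varepsilon$ and $p$ and output its value as $\overline{b}$; that lemma then yields $|\overline{b}-\ncc(H_{\nis})|\le\varepsilon\cdot\nis(H)$ with probability at least $1-p$ within the claimed time bound. The one detail worth flagging is that the algorithm of Lemma~\ref{lemma:bkm_ncc} rescales its sampled estimate by the number of vertices of its input graph, so it must be told that number, here $\nis(H)$; this value is available, being exactly the size of the domain from which non-isolated vertex-sample queries are drawn (and it is maintained by the data structure of Lemma~\ref{lemma:samplenonzero}). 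This completes the plan.
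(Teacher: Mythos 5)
Your proposal is correct and follows the same reduction as the paper: treat $H_{\nis}$ as the input graph to the algorithm of Lemma~\ref{lemma:bkm_ncc}, using the non-isolated vertex-sample oracle as the uniform vertex-sample oracle for $H_{\nis}$. The paper states this more tersely (``we can simply treat $H_{\nis}$ as the input graph''), whereas you spell out the oracle-equivalence check --- that adjacency lists of non-isolated vertices coincide in $H$ and $H_{\nis}$, so the BFS never leaves $H_{\nis}$ --- which the paper implicitly takes for granted.
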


	\paragraph{From static to dynamic}  
	The idea of our dynamic algorithm is as follows. %
	By slightly abusing notation, let $H_i$ denote the
	graph $H$ after the $i$-th update (and $H_0$ denotes the initial graph). Our algorithm first maintains %
	the number $\Gamma$ of non-isolated vertices in the current graph in the straightforward way: we can maintain the degree of each vertex, and count the number of vertices with non-zero degrees, which can be done in constant update time. 

	During initialization we set $\Psi=\Thr_0$ and $\overline{c}=\ncc(H_0)$, which is calculated by running a static BFS traversal on $H_0$. Then it
	partitions the updates (and the corresponding graphs) into \emph{phases}. For notational convenience, we let phase $\PP_0$ consist only  of the graph $H_0$. We let $H'_j$ be the graph corresponding to the last update in $\PP_j$, for any $j\geq 0$. At the beginning of each phase $\PP_j$ (for $j\geq 1$), we are given two parameters, an estimate $\overline{c}$ and $\Psi$, which correspond to estimator of the number of CCs and the number of non-isolated vertices in the graph $H_{j-1}'$ corresponding to the last update in $\PP_{j-1}$.  The phase $\PP_j$ consists of all graphs corresponding to the next $\frac{\varepsilon' \Psi}{4}$ updates. The parameters $\overline{c}$ and $\Psi$ remain unchanged (i.e., they are as fixed at the beginning of the phase) until the last update in $\PP_{j}$. 
	More precisely, let $\Thr'_j$ be the $\Thr$-parameter corresponding to the last update in $\PP_j$. 
	At the end of $\PP_j$, the algorithm sets $\Psi$ to $\Thr'_j$, runs the static algorithm from~Lemma~\ref{lemma:bkm_ncc_nis} to obtain an estimate $\overline{b}$ for $\ncc([H'_j]_{\nis})$ and re-set the estimate $\overline{c}$ to $\overline{b}+n-\nis(H'_j)$.
	Then we start with a new phase $\PP_{j+1}$ consisting of the graphs corresponding to the next $\varepsilon' \Psi/4$ updates, where $\Psi$ is as fixed at the beginning of this phase and then we repeat the above. 
	
	Throughout, we maintain the adjacency list of the dynamic graph in a trivial way and maintain the data structure from Section \ref{sec:nonzero}
	for sampling a vertex uniformly at random from the set of all non-isolated vertices from the current graph. %
	When asked a query on the number of CCs of the current graph, the algorithm returns $\overline{c}$. 
	The description of our algorithm \textsc{RandDynamicNCC} is given as follows, omitting the details for maintaining adjacency list, array of degrees, and data structures for sampling non-isolated vertices.
	\begin{center}
		\begin{tabular}{|p{\textwidth}|}
			\hline
			\textsc{RandDynamicNCC}($H,T,\varepsilon',p$) $\triangleright$		\textbf{Maintaining an estimator $\overline{c}$ for $\ncc(H)$ of a dynamic graph $H$ (with parameter $\Thr \geq \nis(H)$) with additive error $\leq \varepsilon'\cdot \Thr$}
			\begin{enumerate}	
				\item Preprocessing: Use BFS in the initial graph $H_0$  to obtain $\nis(H_0)$ and $\ncc(H_0)$. Start of the first
				phase. Initialize $\Gamma=\nis(H_0)$ and $\overline{c}=\ncc(H_0)$, $\Psi=\Thr_0$. Let $i=1$.

				\item For the $i$-th update: %
				\begin{enumerate}
					\item update $\Gamma$ to be $\nis(H_i)$
					
					\item if $i \mod (\frac{\varepsilon'\cdot\Psi}{4}) = 0$, then \hspace{0.3cm} $\triangleright$ The end of a phase
					\begin{enumerate}
						\item compute an estimator $\overline{b}$ for $\ncc([H_{i}]_{\nis})$ by running the static algorithm in Lemma~\ref{lemma:bkm_ncc_nis} on $H_i$ with parameter $\varepsilon=\frac{\varepsilon'}{4}$ and $p$
						\item set $\overline{c}=\overline{b}+n-\Gamma$ \hspace{0.3cm} $\triangleright$ $n - \Gamma$ is the number of isolated nodes in $H_i$ %
						\item set $\Psi=\Thr_i$
					\end{enumerate}
					\item set $i=i+1$
				\end{enumerate} %
				
			\end{enumerate}		
			\\ \hline
		\end{tabular}
	\end{center}

	\subsubsection{Putting it together}
	Now we prove the correctness of the above dynamic algorithm, analyze its running time and then finish the proof of Theorem \ref{thm:random_NCC}.

	\subparagraph{Correctness.}
	Let $j\geq 0$ and let $\PP_j$ denote the set of all the graphs in the $j$-th phase in the algorithm. In particular, $\PP_0=\{H_0\}$. Let $H'_j$ and $\Thr'_j$ be the graph and the $\Thr$-parameter, respectively, corresponding to the last update in $\PP_j$. We first prove the following claims.
	
	\begin{claim}
		For any $j\geq 0$ and the graph $H=H'_j$, our estimate satisfies that $\abs{\overline{c}-\ncc(H'_j)}\leq {\varepsilon' \nis(H'_j)}/{4}\leq %
		{\varepsilon' \cdot \Thr'_j}/{4}$ with probability $1-p$. 
	\end{claim}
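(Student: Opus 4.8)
The plan is to prove the claim by induction on the phase index $j$, tracking how the estimate $\overline{c}$ drifts away from the true count $\ncc(H'_j)$ across a phase and showing the static algorithm ``resets'' the error at each phase boundary. The base case $j=0$ is immediate: during preprocessing we run an exact BFS on $H_0$, so $\overline{c}=\ncc(H_0)=\ncc(H'_0)$ and the error is $0$.

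For the inductive step, consider phase $\PP_j$ with $j\geq 1$. At the end of the previous phase the algorithm sets $\overline{c}=\overline{b}+n-\Gamma$, where $\Gamma=\nis(H'_{j-1})$ and $\overline{b}$ is the output of the static algorithm from Lemma~\ref{lemma:bkm_ncc_nis} applied to $[H'_{j-1}]_{\nis}$ with parameter $\varepsilon'/4$. Since every isolated vertex is its own CC, we have $\ncc(H'_{j-1})=\ncc([H'_{j-1}]_{\nis})+(n-\nis(H'_{j-1}))$, and by Lemma~\ref{lemma:bkm_ncc_nis} with probability $\geq 1-p$ the estimate $\overline{b}$ satisfies $|\overline{b}-\ncc([H'_{j-1}]_{\nis})|\leq (\varepsilon'/4)\nis(H'_{j-1})$. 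Hence, at the \emph{start} of phase $\PP_j$, the quantity $\overline{c}$ satisfies $|\overline{c}-\ncc(H'_{j-1})|\leq (\varepsilon'/4)\nis(H'_{j-1})$. Now observe that $\overline{c}$ stays frozen during all of $\PP_j$, while $\PP_j$ consists of exactly $\varepsilon'\Psi/4$ updates, where $\Psi=\Thr'_{j-1}\geq \nis(H'_{j-1})$ by precondition (a). Each update changes $\ncc$ by at most $1$ (edge insertions/deletions change the number of CCs by at most one, and self-loops not at all), so for the final graph $H=H'_j$ of the phase, $|\ncc(H'_j)-\ncc(H'_{j-1})|\leq \varepsilon'\Psi/4$. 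Combining via the triangle inequality:
\[
|\overline{c}-\ncc(H'_j)|\leq \frac{\varepsilon'}{4}\nis(H'_{j-1})+\frac{\varepsilon'}{4}\Psi\leq \frac{\varepsilon'}{2}\Psi.
\]
This is not quite the claimed bound $(\varepsilon'/4)\nis(H'_j)$; I expect the actual argument needs a slightly more careful bookkeeping. One clean route: note that a phase of length $\varepsilon'\Psi/4$ can decrease $\nis$ by at most $\varepsilon'\Psi/4\cdot 2$ (each edge update removes at most $2$ non-isolated vertices), so $\nis(H'_j)\geq \nis(H'_{j-1})-\varepsilon'\Psi/2$; but this goes the wrong direction. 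The right fix is probably to budget the phase length against $\Psi$ and prove the \emph{invariant} $|\overline{c}-\ncc(H)|\leq (\varepsilon'/4)\Thr$ holds throughout, not just at phase ends — with $\Psi$ chosen so that $\Thr$ cannot have dropped below $\Psi/2$ (or a similar constant) within a phase; the claim as stated for $H=H'_j$ with the $\nis(H'_j)/4$ bound then follows because at a phase \emph{end} the static algorithm has just been rerun, giving error $\leq (\varepsilon'/4)\nis(H'_j)$ directly from Lemma~\ref{lemma:bkm_ncc_nis}, \emph{before} any drift accumulates.

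The main obstacle is precisely reconciling the two error sources — the static algorithm's $(\varepsilon'/4)\nis$ error and the accumulated $\pm 1$-per-update drift — against a single clean bound phrased in terms of $\nis(H'_j)$ (or $\Thr'_j$). The key leverage is that at the last update of $\PP_j$ the static algorithm is invoked on the \emph{current} graph $H'_j$, so the estimate is refreshed with error measured against $\nis(H'_j)$ itself, with zero accumulated drift at that exact moment; the factor-$4$ slack in $\varepsilon'/4$ versus $\varepsilon'$ absorbs the drift \emph{within} a phase when one later proves the stronger ``throughout the phase'' statement. I would therefore first prove the claim exactly as stated (only at phase endpoints, which is the easy case since drift is zero there and Lemma~\ref{lemma:bkm_ncc_nis} applies verbatim with $\nis(H'_j)\leq \Thr'_j$), and defer the mid-phase analysis to a subsequent claim where the $\pm\varepsilon'\Psi/4$ drift is added and the remaining slack in $\varepsilon'$ versus $4\varepsilon'$ is consumed. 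The probability bound $1-p$ is inherited directly from the single invocation of Lemma~\ref{lemma:bkm_ncc_nis} at the end of $\PP_{j-1}$, and against an adaptive adversary this is sound because fresh randomness is drawn for that static call and no prior randomness is reused.
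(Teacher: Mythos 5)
Your first two-thirds pursues a dead end: you set up an induction that tracks drift from the end of phase $\PP_{j-1}$ to the end of phase $\PP_j$ and land on a bound of order $\varepsilon'\Psi/2$, which, as you observe yourself, is not the claimed $(\varepsilon'/4)\nis(H'_j)$. That drift calculation is irrelevant to \emph{this} claim (it is closer to what the paper does in the \emph{next} claim, for mid-phase graphs), and you should simply delete it.

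You then self-correct and arrive at the paper's actual argument: the claim only concerns the moment right after the last update of $\PP_j$, and at exactly that moment the algorithm reruns the static estimator of Lemma~\ref{lemma:bkm_ncc_nis} \emph{on the current graph $H'_j$}, not on $H'_{j-1}$. So there is no accumulated drift to account for. Lemma~\ref{lemma:bkm_ncc_nis} directly gives $|\overline{b}-\ncc([H'_j]_{\nis})|\leq (\varepsilon'/4)\nis(H'_j)$ with probability $1-p$; together with $\overline{c}=\overline{b}+n-\nis(H'_j)$ and the identity $\ncc(H'_j)=\ncc([H'_j]_{\nis})+(n-\nis(H'_j))$ (isolated vertices are singleton components) this yields $|\overline{c}-\ncc(H'_j)|\leq(\varepsilon'/4)\nis(H'_j)$, and precondition (a), $\nis(H'_j)\leq\Thr'_j$, finishes. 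That, plus the trivial $j=0$ base case you already have, is the entire proof.

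One residual error: your final sentence attributes the probability bound to the static invocation ``at the end of $\PP_{j-1}$.'' That index is wrong and contradicts what you wrote two sentences earlier; the relevant invocation is at the end of $\PP_j$, on $H'_j$. The invocation at the end of $\PP_{j-1}$ only matters when bounding the error for graphs strictly inside phase $\PP_j$, i.e., in $\PP_j\setminus\{H'_j\}$.
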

	
	\begin{proof}
		For $j=0$, it holds that $H'_0=H_0$ and our estimate $\overline{c}=\ncc(H_0)$ by definition. Let $j\geq 1$. Our algorithm calls the static algorithm from Lemma \ref{lemma:bkm_ncc_nis}, which returns with probability $1-p$
		an estimator $\overline{b}$ for $\ncc([H'_j]_{\nis})$
		such that $|\overline{b}-\ncc([H'_j]_{\nis})|\leq \frac{\varepsilon' \nis(H'_j)}{4}$. %
		This gives 
		$|\overline{c}-\ncc(H'_j)|=|\overline{b}+n-\nis(H'_j)-\ncc(H'_j)|=|\overline{b}-\ncc([H'_j]_{\nis})|\leq \frac{\varepsilon' \nis(H'_j)}{4}\leq \frac{\varepsilon' \Thr'_j}{4},$
		where the second equation follows from the fact that $\ncc(H'_j)$ is the sum of $\ncc([H'_j]_{\nis})$ and the number of isolated vertices, $n-\nis(H'_j)$.
	\end{proof}

	\begin{claim}
		For any $j\geq 0$, any $H\in \PP_{j+1}\setminus \{H'_{j+1}\}$ that is the graph right after the $i$-th update, it holds that $\abs{\overline{c}-\ncc(H)}\leq \varepsilon' \Thr_i$ with probability $1-p$.
	\end{claim}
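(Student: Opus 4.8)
The plan is to exploit the fact that throughout phase $\PP_{j+1}$ the estimate $\overline{c}$ is frozen at the value computed at the end of phase $\PP_j$, and to bound $\abs{\overline{c}-\ncc(H)}$ by a triangle inequality that splits it into (i) the error of $\overline{c}$ against $\ncc(H'_j)$, which the previous claim already controls, and (ii) the ``drift'' of the number of connected components along the updates of $\PP_{j+1}$ processed so far. Throughout I condition on the success of the single static invocation that set $\overline{c}$ at the end of $\PP_j$ (for $j=0$ there is no randomness, since $\overline{c}=\ncc(H_0)$ exactly); this event has probability at least $1-p$, which is exactly the failure probability stated in the claim.

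For part (i), the previous claim gives $\abs{\overline{c}-\ncc(H'_j)}\le \varepsilon'\nis(H'_j)/4 \le \varepsilon'\Thr'_j/4$. For part (ii), recall that phase $\PP_{j+1}$ consists of exactly $\varepsilon'\Psi/4=\varepsilon'\Thr'_j/4$ updates, where $\Psi=\Thr'_j$ is the value fixed at the start of the phase; hence if $H=H_i$ is obtained strictly before the last update of $\PP_{j+1}$, then at most $\varepsilon'\Thr'_j/4$ updates separate $H'_j$ from $H$. Since each update to $H$ is a single edge insertion or deletion, it changes the number of connected components by at most $1$ (in particular a self-loop update changes it by $0$), so $\abs{\ncc(H'_j)-\ncc(H)}\le \varepsilon'\Thr'_j/4$. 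Combining (i) and (ii) by the triangle inequality gives $\abs{\overline{c}-\ncc(H)}\le \varepsilon'\Thr'_j/2$.

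It remains to replace $\Thr'_j$ by $\Thr_i$. Here I invoke precondition (b): the $\Thr$-parameters at consecutive timestamps differ by at most $2$. Since at most $\varepsilon'\Thr'_j/4$ timestamps lie between the last update of $\PP_j$ and the $i$-th update, $\abs{\Thr_i-\Thr'_j}\le 2\cdot\varepsilon'\Thr'_j/4=\varepsilon'\Thr'_j/2<\Thr'_j/2$, using $\varepsilon'<1$; therefore $\Thr_i>\Thr'_j/2$, i.e. $\Thr'_j<2\Thr_i$. Plugging this in yields $\abs{\overline{c}-\ncc(H)}\le\varepsilon'\Thr'_j/2<\varepsilon'\Thr_i$, which proves the claim.

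The only step needing care — and what I would flag as the main (though minor) obstacle — is the apparent circularity in the last step: the bound on $\abs{\Thr_i-\Thr'_j}$ is itself stated in terms of $\Thr'_j$. This is not actually circular, because the length of phase $\PP_{j+1}$ was deliberately set to a $\Theta(\varepsilon')$ fraction of $\Thr'_j$, so the total change of the $\Thr$-parameter over the phase is a $\Theta(\varepsilon')$ fraction of $\Thr'_j$ and, for $\varepsilon'<1$, cannot push $\Thr$ below half of $\Thr'_j$. This same scaling is what makes both error contributions in (i) and (ii) at most $\varepsilon'\Thr'_j/4$ rather than something larger, so the choice of phase length is the single quantitative ingredient that makes all three estimates fit together.
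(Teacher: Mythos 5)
Your proof is correct and follows essentially the same route as the paper's: condition on the success of the static invocation at the end of $\PP_j$, bound the error at $H'_j$ by $\varepsilon'\Thr'_j/4$, add a drift of at most $\varepsilon'\Thr'_j/4$ from the at-most-one-per-update change in $\ncc$, and then use precondition (b) together with the phase length to show $\Thr'_j\le 2\Thr_i$, converting the bound $\varepsilon'\Thr'_j/2$ into $\varepsilon'\Thr_i$. The paper's version uses the slightly tighter inequality $\Psi\le(1+\varepsilon')\Thr_i$, but the coarser $\Thr'_j<2\Thr_i$ you derive suffices just as well.
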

	
	\begin{proof}
		Note that after update $f_j$, $\Psi=\Thr'_j\geq \nis(H'_j)$; both $\Psi$ and the estimate $\overline{c}$ remain unchanged before the last update in $\PP_{j+1}$. 
		As each update changes $\ncc(H)$ by at most 1,
		with at least probability $1-p$, it holds that $\abs{\overline{c}-\ncc(H)}\leq {\varepsilon' \nis(H'_j)}/{4}+{\varepsilon' \Psi}/{4} \leq {\varepsilon' \Psi}/{2}$ for all graph $H\in \PP_{j+1}\setminus\{H'_{j+1}\}$.
		Note that $|\Psi-\Thr_i|\leq \frac{\varepsilon'\Psi}{2}$, as each update (during the next $\frac{\varepsilon' \Psi}{4}-1$ updates) can change $\Thr_i$ by at most $2$. Thus $\Psi\leq \frac{1}{1-\varepsilon'/2}\Thr_i\leq (1+\varepsilon')\Thr_i\leq 2\Thr_i$. This implies that $\overline{c}$ approximates $\ncc(H)$ with an additive error $\varepsilon' \Thr_i$.
	\end{proof}

	Thus with probability $1-p$, for any $H$ (right after the $i$-th update), we have $\abs{\overline{c}-\ncc(H)}\leq \varepsilon' \Thr_i$. Finally, note that the algorithm uses ``fresh'' random bits at the beginning of each phase, only needs to access to the current graph, and does not reuse any information computed in prior phases. Within each phase we performed a worst-case analysis, i.e., we
	assumed that the adversary changes the graph in the worst possible way, i.e., changing $\ncc(H)$ by 1 in each update.
	Thus, our algorithm works against an adaptive adversary.
	
	\subparagraph{Running time.}
	For each phase that contains all graphs in $\PP_{j+1}\setminus \{H'_{j+1}\}$ with parameter $\Psi=\Thr'_j$, the amortized time (over the phase) per update operation is 
	\[
	O(\max\{1,{(1/\varepsilon'^2)\log(1/\varepsilon')\log(1/p)}/({\varepsilon' \Psi})\}) = O(\max\{1,{\log(1/\varepsilon')\log(1/p)}/({\varepsilon'^3 \Psi})\}).
	\]
	(Note that we always need to use $O(1)$ time to update the adjacency list and other data structures so as to provide query access to the graph). Since $\Thr^{*}$ is the minimum value $\Thr_i$ over graphs throughout all the updates, then in any phase, $\Psi\geq \Thr^{*}$ and the amortized update time is
	$O(\max\{1,\frac{\log(1/\varepsilon')\log(1/p)}{\varepsilon'^3 \Thr^{*}}\}).$ The worst-case update time guarantee follows from the standard global rebuilding technique. %

	\section{Lower Bounds: Proofs of Theorem \ref{thm:deterministiclowerbound} and \ref{thm:lower}}\label{sec:lower}
	
	\subsection{A Lower Bound for Deterministic Data Structure}
	Our first lower bound is built upon a construction by Henzinger and Fredman \cite{henzinger1998lower}.
	The \emph{parity prefix sum} problem is defined as follows: Given an array
	$A[1], \dots, A[n]$ with entries from $\{0,1\}$, initialized with 0,
	build a data structure that executes an arbitrary sequence of the following operations:
	(1) $\Add(i)$: Set $A[i] = (A[i] + 1) \mod 2$; (2) $\Sum(i)$: return $\sum_{j = 1}^i A[i] \mod 2$. Fredman and Saks~\cite{FS89:cell} showed the following theorem. 
	
	\begin{theorem}[\cite{FS89:cell}]
		There is an amortized lower bound of $\Omega(\log n/(\log \log n +\log b))$ update time per operation for the parity prefix sum problem in the cell probe model with word size $b$.
	\end{theorem}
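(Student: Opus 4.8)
The plan is to prove this classical bound by the \emph{chronogram} (or ``time-stamp'') method. Suppose, for contradiction, that some cell-probe data structure with word size $b$ supports the parity prefix sum problem with amortized cost $t$ per operation, where $t< c\log n/(\log\log n+\log b)$ for a suitably small absolute constant $c$. We may assume $t\le\log n$ (otherwise the claimed bound is trivial). Set $g:=C_0 tb$ for a large constant $C_0$, so that $\log g=O(\log\log n+\log b)$, and consider the hard instance that performs $N=\Theta(n)$ operations $\Add(j)$ with $j$ chosen independently and uniformly from $\{1,\dots,n\}$, followed by a batch of $\Theta(n)$ queries $\Sum(q)$ with each $q$ independent and uniform in $\{1,\dots,n\}$ (treating queries as read-only, which is standard in this setting). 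Partition the $\Add$'s into $k=\Theta(\log_g N)$ \emph{epochs}, epoch $i$ (numbered so that epoch $1$ is the most recent) being the block of the $g^i$ most recent still-unassigned updates, so $\sum_{i=1}^k g^i=\Theta(N)$.

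First I would set up the epoch structure. After all $\Add$'s have been performed, label each memory cell by the index of the most recent epoch in which it was written (label $0$ if it was written only during preprocessing, or never), and let $C_\ell$ denote the set of cells with label $\ell$. Two observations drive the proof: \textbf{(i)} a cell in $C_\ell$ with $\ell>i$ was last written strictly before epoch $i$ began, so its contents are independent of the random updates of epoch $i$; and \textbf{(ii)} the epochs more recent than $i$, namely epochs $i-1,\dots,1$, execute only $\sum_{\ell<i}g^\ell=O(g^{i-1})$ updates in total and hence write at most $O(g^{i-1}t)=O(g^it/g)$ cells, carrying at most $O(g^itb/g)\le\tfrac1{100}g^i$ bits once $C_0$ is large enough. (Because $t$ is only an \emph{amortized} bound, the updates right after epoch $i$ could in principle be unusually expensive; this is handled by the standard Fredman--Saks amortization bookkeeping, which shows only an $o(1)$ fraction of the epochs can be ``spoiled'' this way, and these we discard. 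With a worst-case bound this issue does not arise.)

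The core claim is: \emph{for every non-spoiled epoch $i$, each query probes at least one cell of $C_i$ with probability $\ge\tfrac12$}, over the random $\Add$'s and the query's random index. Granting this, linearity of expectation and disjointness of $C_1,\dots,C_k$ give that each query makes $\Omega(k)=\Omega(\log n/(\log\log n+\log b))$ probes in expectation, so the whole instance incurs total cost $\Omega(n\log n/(\log\log n+\log b))$ over $\Theta(n)$ operations, i.e.\ amortized $\Omega(\log n/(\log\log n+\log b))$ per operation, contradicting the bound on $t$ for $c$ small enough. To prove the claim, fix all epochs except epoch $i$; then for every $q$ the correct value of $\Sum(q)$ equals $P_i(q)\oplus\xi(q)$, where $\xi(q)$ depends only on the other epochs and $P_i(q)$ is the parity of the number of epoch-$i$ updates landing in $\{1,\dots,q\}$. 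The function $P_i$ is a prefix-parity step function whose flip set $F=\{q:\ q\text{ is hit an odd number of times during epoch }i\}$ satisfies $|F|=\Theta(g^i)$ with high probability, with $H(F\mid\text{other epochs})=\Omega(g^i)$. Now suppose a query violated the claim, i.e.\ avoided $C_i$ with probability $>\tfrac12$; then (after fixing a typical update sequence) there is a set $Q_i\subseteq\{1,\dots,n\}$ of more than $n/2$ indices on which the query reads no cell of $C_i$, so that $\{\Sum(q):q\in Q_i\}$ --- equivalently $\{P_i(q):q\in Q_i\}$ --- is a function of the label-$(<i)$ cells alone (the label-$(>i)$ cells carry no epoch-$i$ information), i.e.\ is determined by $\le\tfrac1{100}g^i$ bits. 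But since $|Q_i|>n/2$, consecutive elements of $Q_i$ are adjacent for $\Omega(n)$ indices, so these answers reveal $\mathbf 1\{q\in F\}$ for $\Omega(n)\ge\Omega(g^i)$ coordinates $q$ and hence carry $\Omega(g^i)$ bits about $F$ --- a contradiction.

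I expect the information-theoretic step in the last paragraph to be the main obstacle: the subtlety is that the ``good'' set $Q_i$ of queries avoiding $C_i$ is itself determined by the memory state and is therefore correlated with the randomness of epoch $i$, so the clean ``$\le\tfrac1{100}g^i$ bits determine $\Omega(g^i)$ bits'' contradiction must be replaced by a genuine encoding argument --- one shows that from the label-$(<i)$ cells, the identities and contents of the few $C_i$-cells actually read by the surviving queries, and a short auxiliary string, one can recover $F$ using in expectation fewer than $H(F\mid\text{other epochs})$ bits. The only other technical point, the amortization bookkeeping in observation (ii), is standard but fiddly; everything else (the entropy estimate $H(F)=\Omega(g^i)$, the arithmetic reconciling $g=\Theta(tb)$ with the target bound) is routine.
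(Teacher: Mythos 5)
The paper does not prove this theorem: it is stated verbatim as a citation to Fredman and Saks \cite{FS89:cell}, and the paper uses it as a black box in the reductions of Section~\ref{sec:lower}. So there is no ``paper proof'' to compare against. That said, your sketch is a faithful reconstruction of the Fredman--Saks chronogram argument, which is indeed how the cited result is proved. Your choice of epoch-size ratio $g=\Theta(tb)$, the labelling of cells by the most recent epoch that wrote them, the two observations (label-$>i$ cells are independent of epoch $i$'s randomness, and label-$<i$ cells carry only $O(g^i/C_0)$ bits), and the resulting claim that a random query must in expectation hit $C_i$ for $\Omega(1)$ of the $k=\Theta(\log n/\log g)$ epochs, are all exactly the right skeleton, and your arithmetic recovering $\Omega(\log n/(\log\log n+\log b))$ is correct.

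You have also correctly flagged the two places where the sketch would need real work: (i) the amortized-to-per-epoch bookkeeping that discards a vanishing fraction of ``spoiled'' epochs, and (ii) the fact that the naive ``$\tfrac1{100}g^i$ bits determine $\Omega(g^i)$ bits of $F$'' contradiction is not literally valid, because the surviving-query set $Q_i$ is itself a random variable correlated with epoch $i$, so a genuine one-shot encoding of $F$ (from the label-$<i$ cells plus the few $C_i$-cells actually probed plus a short header) is needed. That you explicitly name these as the load-bearing gaps, rather than glossing over them, is the right judgment call for a sketch of a theorem this subtle. One very minor quibble: the entropy of $F$ conditioned on the other epochs is closer to $\Theta(g^i\log(n/g^i))$ than $\Theta(g^i)$; your lower bound $\Omega(g^i)$ is still correct and suffices, but stating the tighter quantity would make the slack in the contradiction clearer. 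In summary, the approach matches the intended source; it is a sketch rather than a complete proof, but a competent and self-aware one.
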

	Now we show the following.
	\begin{theorem}\label{thm:HFlower_2}
		Let $G$ be a dynamic $n$-vertex graph with edge weights in $[1,W]$. Any data structure that dynamically maintains the weight $M$ of an MSF of a graph $G$  within an additive error less than $W/2$ for any $W\geq 1$, or a multiplicative factor of  $(1+\varepsilon)$,  for any $W> 2\varepsilon n$ {must perform $\Omega(\log n/(\log \log n +\log b))$ cell probes, where each cell has size $b$.}
	\end{theorem}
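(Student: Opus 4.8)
The plan is to reduce the \emph{parity prefix sum} problem to maintaining an (approximate) MSF weight, so that the Fredman--Saks bound transfers. Given an array $A[1],\dots,A[n]\in\{0,1\}$, I build a ``crossing ladder'' graph $G_A$ on the $2(n+1)$ vertices $p_0,\dots,p_n,q_0,\dots,q_n$: for each $j\in\{1,\dots,n\}$, if $A[j]=0$ I include the ``straight'' gadget edges $(p_{j-1},p_j)$ and $(q_{j-1},q_j)$, while if $A[j]=1$ I include the ``crossed'' gadget edges $(p_{j-1},q_j)$ and $(q_{j-1},p_j)$; all these edges have weight $1$. A first induction on $k$ shows that the subgraph on gadgets $1,\dots,k$ is always a union of exactly two vertex-disjoint paths, one from $p_0$ and one from $q_0$, covering all $2(k+1)$ vertices; hence $G_A$ itself is two disjoint paths with $n$ edges each, and its MSF weight is exactly $2n$. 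A second induction on $i$ (tracking which of $p_i,q_i$ lies on $p_0$'s path, toggled by each crossing) shows that $p_0$ and $p_i$ lie on the same path iff $\sum_{j=1}^i A[j]\equiv 0\pmod 2$. Initially $A\equiv 0$ and $G_A$ is the ``all straight'' ladder, matching the all-zero array, and an $\Add(i)$ operation (which flips $A[i]$) is realized by $O(1)$ edge updates: delete the two old gadget-$i$ edges and insert the two new ones.

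To answer a $\Sum(i)$ query (for $i\ge 2$; the cases $i\in\{0,1\}$ are maintained directly in $O(1)$ time), I temporarily insert the chord $(p_0,p_i)$ of weight $W$, read off the maintained estimate of the MSF weight, and then delete the chord. If $\sum_{j=1}^i A[j]$ is even, $p_0$ and $p_i$ already lie on a common path; the chord creates a cycle but changes no connectivity, and since its weight $W\ge1$ is not smaller than the unit weights of the path already joining its endpoints, no MSF uses it, so the MSF weight stays $2n$. If the sum is odd, the chord joins the two paths into a single tree on $2(n+1)$ vertices, hence lies in every spanning tree, and the MSF weight becomes exactly $2n+W$. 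It remains to check that the maintained estimate separates $2n$ from $2n+W$ under the stated hypotheses: with additive error less than $W/2$ the two true values lie strictly on opposite sides of the threshold $2n+W/2$ for every $W\ge1$ (for $W=1$ the estimate recovers the integer MSF weight exactly, which covers the unweighted case); with a multiplicative $(1+\varepsilon)$ guarantee and $W>2\varepsilon n$, the estimate in the even case is at most $(1+\varepsilon)2n=2n+2\varepsilon n<2n+W$ while in the odd case it is at least $2n+W$, so the threshold $2n+W$ separates them. Either way, one query plus two edge updates recovers $\sum_{j=1}^i A[j]\bmod 2$.

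Putting it together: a data structure maintaining the (approximate) MSF weight with amortized cost $t$ per operation yields a parity prefix sum structure with amortized cost $O(t)$ per operation, on a graph with $2(n+1)=\Theta(n)$ vertices and weights in $[1,W]$; by the Fredman--Saks lower bound this forces $t=\Omega(\log n/(\log\log n+\log b))$ in the cell-probe model with cell size $b$, as claimed (both the amortized and worst-case versions transfer, since each $\Sum$ only inserts and then deletes its chord). The step needing the most care is the structural analysis of $G_A$ together with the chord insertion --- proving that $G_A$ is \emph{always} two vertex-disjoint paths, so that the chord is never used by an MSF in the even case and is a bridge in the odd case, pinning the MSF weight to exactly $2n$ resp.\ $2n+W$ --- and then the bookkeeping that makes the separation hold in exactly the parameter window $W\ge1$ (additive) resp.\ $W>2\varepsilon n$ (multiplicative); matching the latter constant relies on using the one-sided form $M\le M'\le(1+\varepsilon)M$ of the approximation guarantee, from which the two-sided form follows after rescaling $\varepsilon$ by a constant factor. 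A minor point is that the chord $(p_0,p_i)$ would coincide with a gadget edge only when $i=1$, which is why those small indices are handled outside the data structure (equivalently, one can pad the construction with two extra anchor vertices rigidly attached to $p_0,q_0$, changing all constants by $O(1)$).
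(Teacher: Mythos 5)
Your proof is correct and arrives at the same conclusion, but it takes a genuinely different route in the graph construction. The paper follows Henzinger--Fredman and maintains two paths that are kept \emph{sorted by vertex index} --- an ``even'' path of all indices with even prefix sum attached to a special vertex $even$, and an analogous ``odd'' path --- so that each $\Add(i)$ toggles which path vertex $i$ belongs to. The cost of this is that the reduction must locate $i$'s neighbors in the path ordering, for which it maintains a van Emde Boas priority queue, adding $O(\log\log n)$ overhead per $\Add(i)$; this overhead must then be observed to be dominated by the $\Omega(\log n/(\log\log n+\log b))$ target. Your crossing-ladder construction sidesteps all of that: a gadget at position $i$ encodes $A[i]$ locally (straight vs.\ crossed pair of edges), so each $\Add(i)$ maps to exactly four edge updates that can be computed from $A[i]$ alone with no auxiliary search structure and no $\log\log n$ overhead, and the parity invariant is tracked by whether $p_0$ and $p_i$ lie on the same of the two vertex-disjoint paths (your second induction). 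From that point on the two proofs coincide: both simulate $\Sum(i)$ by temporarily inserting one weight-$W$ edge, observe that the true MSF weight is a fixed value $M$ or $M+W$ depending on the parity, and argue that an additive error $<W/2$ (resp.\ a multiplicative $(1+\varepsilon)$ guarantee when $W=\Omega(\varepsilon n)$) separates the two, so the Fredman--Saks bound transfers. Your version is a bit cleaner and more self-contained; the paper's version has the advantage of being a standard citation to \cite{henzinger1998lower}. One small bookkeeping note: your graph has $2(n+1)$ vertices for an array of length $n$, so after renaming $N=2(n+1)$ your separation threshold is roughly $W>\varepsilon N$, which is slightly better than (and in particular implies) the theorem's stated $W>2\varepsilon N$; and as you observe, matching the stated constant depends on taking the one-sided form $M\le M'\le(1+\varepsilon)M$, which is consistent with the paper's footnote about rescaling $\varepsilon$.
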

	\begin{proof}
		The proof is analogous to the lower bound for dynamic connectivity in~\cite{henzinger1998lower}. Give a parity prefix sum problem build a graph with $n$ vertices, labeled from $1, \dots, n$,
		one vertex called $even$, and one vertex called $odd$. At any point in time the graph maintains the following invariant: It consists of two paths, namely an \emph{even} path containing all vertices $i$ (called ``even'' vertices) such that $\Sum(i)$ returns $0$, in order of their indices, and such that the vertex in the chain is connected to vertex $even$; and an analogous \emph{odd} path containing all ``odd'' vertices, connected to vertex $odd$.
		It is shown that each $\Add(i)$ only leads to a constant number of edge updates in the graph.
		To determine which edges need to be deleted and inserted, a Van-Emde-Boas priority queue is maintained, which adds a cost of $O(\log \log n)$ to each $\Add(i)$ operation.
		To answer a parity query $\Sum(i)$ return $1$ iff vertex $odd$ and vertex $i$ are connected.
		
		Now we show how to use this construction to give a lower bound for the dynamic approximate MSF weight problem. We maintain the same graph as above, and give each edge weight 1.
		Thus the weight of an MSF before and after every operation is $n-2$.
		To answer a $\Sum(i)$ query we insert an edge of weight $W$ from vertex $odd$ to vertex $i$.
		If the answer is $1$, then $i$ and $odd$ were already connected before the edge insertion and the weight of an MSF is unchanged. If the answer is 0, then $i$ and $odd$ were not connected before the edge insertion and, thus, the weight of an MSF is now $n-2 + W$.
		Thus any dynamic MSF algorithm with multiplicative actor of  $(1+\varepsilon)$,  for any $W> 2\varepsilon n$ and any dynamic MSF algorithm with additive error less than $W/2$ for any $W\geq 1$, can distinguish the two cases. Thus, they must perform $\Omega(\log n/(\log \log n +\log b))$ cell probes, where each cell has size $b$.
	\end{proof}
	
	Now we are ready to prove Theorem \ref{thm:deterministiclowerbound}.
	\thmdeterlower*
	\begin{proof}
		We let $G'$ be a graph with $n'$ vertices and edge weights in $[1,W]$, for $n'=\lfloor\frac{W}{3\varepsilon}\rfloor$. %
		By Theorem \ref{thm:HFlower_2}, any data structure that dynamically maintains the weight $M$ of an MSF of a graph $G'$ with $W> 2\varepsilon n'$ within an additive error less than $W/2$ for any $W\geq 1$, or a multiplicative factor of  $(1+\varepsilon)$  must perform $\Omega(\log n'/(\log \log n' +\log b))$ cell probes, where each cell has size $b$.
		
		Note that  $n'=(\log n)^{\omega_n(1)}/\varepsilon$. Now assume that we have a data structure $\mathcal{A}$ that dynamically maintains the $(1+\varepsilon)$-approximation of the weight of an MSF of an $n$-vertex graph. (The case that $\mathcal{A}$ has an additive error less than $W/2$ can be analyzed similarly.)  We use $\mathcal{A}$ to the graph $\bar{G}$ which consists of $G'$ and $n-n'$ isolated vertices. Note that the weight of an MSF of $\bar{G}$ is the same as the weight of an MSF of $G'$. Thus $\mathcal{A}$ maintains a $(1+\varepsilon)$-approximation of the weight of an MSF of $G'$. %
		
		By setting $b=\Theta(\log n)$, we know that $\mathcal{A}$ must perform $\Omega(\log n'/(\log \log n' +\log b))=\omega_n(1)$ cell probes, where each cell has size $O(\log n)$. %
	\end{proof}
	
	\subsection{A Lower Bound for Randomized Data Structure}
	We let $m^*(G)$ be the minimum number of edges of a dynamic graph $G$ throughout all the updates. 
	We first prove the following. %
	\begin{theorem}\label{thm:firstlowerbound}
		Let $G$ be a dynamic graph with edge weights in $[1,W]$. Any data structure that dynamically maintains the weight $M$ of an MSF of a graph $G$ with $W=\Omega(\varepsilon\cdot m^*(G))$ and $m^*(G)=\Theta(n)$ within an additive error less than $W/2$ for any $W\geq 1$, or a multiplicative factor of  $(1+\varepsilon)$, %
		with probability at least $1-\frac{1}{n^c}$ for some constant $c>0$, {must perform $\Omega((\log n)^2/b)$ cell probes, where each cell has size $b$.}
	\end{theorem}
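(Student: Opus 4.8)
The plan is to derive Theorem~\ref{thm:firstlowerbound} by reducing dynamic connectivity to the dynamic approximate-MSF-weight problem and invoking the cell-probe lower bound of Patrascu and Demaine~\cite{PatrascuD06}. Recall that \cite{PatrascuD06} exhibit a distribution over length-$\poly(n)$ sequences of edge insertions, edge deletions, and connectivity queries on a dynamic $n$-vertex graph such that any (randomized) data structure answering all connectivity queries correctly with probability $1-1/n^{c'}$ must make $\Omega((\log n)^2/b)$ amortized cell probes per operation, with cells of $b$ bits. I will use that the hard instance behind this bound is \emph{sparse}: throughout all updates the graph has $\Theta(n)$ edges (in fact it is a disjoint union of $\Theta(\sqrt n)$ paths, so its number of components, and hence its unit-weight MSF weight, is a fixed value that is trivially maintained). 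The sparseness is exactly what lets the instance satisfy the hypothesis $m^*(G)=\Theta(n)$.

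Now suppose $\mathcal{A}$ maintains, with high probability, a $(1+\varepsilon)$-approximation $\overline{M}$ of the MSF weight of an $n$-vertex graph with edge weights in $[1,W]$. First I would run $\mathcal{A}$ on the Patrascu--Demaine hard instance, assigning weight $1$ to every edge; then $m^*=\Theta(n)$ as required, and I may take $W$ to be a large enough constant multiple of $\varepsilon n$, so that $W=\Omega(\varepsilon m^*)$ and $W\ge 1$. Each edge insertion/deletion of the connectivity instance is passed to $\mathcal{A}$ verbatim (with weight $1$). To answer a connectivity query on $(u,v)$ against the current graph $G$, I would read $\mathcal{A}$'s estimate $\overline{M}_{\mathrm{before}}$, call $\Insert(u,v,W)$, read the new estimate $\overline{M}_{\mathrm{after}}$, and then call $\Delete(u,v)$ to restore $G$. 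The key structural fact is that adding the weight-$W$ edge changes the true MSF weight by $0$ if $u$ and $v$ are already connected in $G$ (it closes a cycle on which it is the heaviest edge) and by exactly $W$ if they are not (it becomes a bridge and hence lies in every MSF). Since the MSF weight of $G$ is at most $n-1$ and $W\ge C\varepsilon n$ for a suitably large constant $C=C(\varepsilon)$, one checks that in the two cases the ranges of possible values of $\overline{M}_{\mathrm{before}}$ and $\overline{M}_{\mathrm{after}}$ are separated; concretely, $u$ and $v$ are connected iff $\overline{M}_{\mathrm{after}}/\overline{M}_{\mathrm{before}}\le (1+\varepsilon)/(1-\varepsilon)$. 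The additive-error variant ("error less than $W/2$") is handled analogously, now using that the hard instance's unit-weight MSF weight is a fixed known quantity, so that a single post-insertion estimate, compared against that quantity plus $W/2$, already decides connectivity.

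Finally I would account for cost and error: each connectivity operation uses $O(1)$ operations of $\mathcal{A}$, so over the $\poly(n)$-length sequence only $\poly(n)$ invocations of $\mathcal{A}$ occur, and a union bound over their $\le 1/n^c$ failure probabilities keeps the overall success probability at $1-1/n^{c'}$. Thus $\mathcal{A}$ yields a dynamic connectivity data structure that is correct with high probability and performs only $O(1)$ times as many cell probes per operation, so by \cite{PatrascuD06} it must make $\Omega((\log n)^2/b)$ amortized cell probes per operation with $b$-bit cells, which is the claimed bound. I expect no deep obstacle here; the one thing to get right is the parameter regime — the $0$-versus-$W$ gap in MSF weight must be wide enough to survive a $(1+\varepsilon)$-multiplicative (or $W/2$-additive) estimator, which forces $W$ to be a large constant times the MSF weight $\Theta(n)$, and this is possible precisely when $W=\Omega(\varepsilon m^*)$ with $m^*=\Theta(n)$, i.e., when the underlying connectivity instance is sparse.
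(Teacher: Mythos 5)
Your proposal is correct and follows essentially the same route as the paper: reduce the Patrascu--Demaine dynamic connectivity lower bound to approximate MSF-weight by assigning unit weights to the hard instance, noting it is sparse so $m^*(G)=\Theta(n)$, and simulating each connectivity query $(u,v)$ by temporarily inserting a weight-$W$ edge $(u,v)$ and observing whether the MSF weight jumps by $0$ or by $W$, which the approximator can distinguish once $W$ is a sufficiently large multiple of $\varepsilon m^*$. The only cosmetic difference is that the paper also adds an auxiliary vertex $s$ joined to all of column $1$ and deletes $(u,s)$ before the probe insertion, whereas you work directly on the disjoint-paths graph and compare the estimate before and after the insertion; both yield the same $0$-versus-$W$ gap.
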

	\begin{proof}%

		In~\cite{PatrascuD06} Patrascu and Demaine construct an $n$-node graph and show that there exists a 
		sequence $\mathcal S$ of $T$ edge insertion, edge deletion, and query operations such that any data structure for dynamic connectivity {must perform  
			$\Omega((\log n)^2/b)$ cell probes to process the sequence, where each cell has size $b$. This 
			shows that the amortized number of cell probes per operation is $\Omega((\log n)^2/b)$.}

		The graph $G$ in the proof of \cite{PatrascuD06} consists of a $\sqrt n \times \sqrt n$ grid, where each node in column $1$ has exactly 1
		edge to a node of column 2 and no other edges, each node in column $i$, with $1 < i < \sqrt n$ has exactly 1 edge to a node of column $i-1$ and 1 edge to a node of column $i+1$ and no other edges, and
		each node in column $\sqrt n$ has exactly 1 edge to a node of column $\sqrt n -1$ and no other edges.
		Thus, the graph consists of $\sqrt n$ paths of length $\sqrt n -1$ and the edges between column $i$ and $i+1$ for any $1 \le i < \sqrt n$ represent a permutation of the $\sqrt n$ rows.
		The sequence $\mathcal S$ consists of ``batches'' of $O(\sqrt n)$ edge updates, replacing the permutation of some column $i$ by a new permutation for column $i$. Between the batches of updates are ``batches'' of connectivity queries, each consisting of $\sqrt n$ connectivity queries
		and a parameter $1 \le k \le \sqrt{n}$, where the $j$-th query 
		for $1 \le j \le \sqrt n$ of each batch tests whether the $j$-th vertex of column 1 is connected with a specific vertex of column $k$. 
		
		We now show how to use this construction to give a lower bound for the dynamic approximate MSF problem.
		We add a new vertex $s$ and an edge between $s$ and every vertex in column 1
		and give weight 1 to every edge.
		We now show how to modify each connectivity
		query $(u,v)$ such that it consists of a constant number of edge updates and one query for the value of an MSF. 
		Thus, in the resulting sequence $\mathcal S'$ the number of query operations equals the number of
		query operations in $\mathcal S$ and the number of update operations is linear in the number of update and query operations in $\mathcal S$. 
		Thus the total number of operations in $\mathcal S'$ is only a constant factor larger
		than the number of operations in $\mathcal S$, which, together with the result of~\cite{PatrascuD06}, implies that the amortized number 
		of cell probes per operation is $\Omega((\log n)^2/b)$.
		
		We now show how to simulate a connectivity query($u,v$), where $u$ is in column 1 and $v$ is in column $k$ for some $1 \le k \le \sqrt{n}$. 
		To simulate a connectivity query($u,v$) we (1) remove the edge from $u$ to
		$s$, (2) add the edge $(u,v)$ with weight $W$  and then (3) ask a query for the weight $M$ of an MSF. Afterwards we undo the changes to $G$.
		Note that if $u$ and $v$ are connected in $G$ then the edge ($u,v)$ does not belong to the MSF (and the graph is disconnected), otherwise it does. 
		Furthermore, the value $M$ will be $n-2$ if $u$ and $v$ are connected, and
		$n-2+W$ if $u$ and $v$ are not connected.
		This implies that if our data structure maintains $M$ within an additive error less than $W/2$, for any $W\geq 1$, or within a multiplicative factor of $1+\varepsilon$ for $W> 2\varepsilon n$, then we can also test connectivity. Then the statement of the theorem follows from the lower bound for testing connectivity.  %
		Finally, we note that $m^*(G)= n-O(\sqrt{n})=\Theta(n)$ by construction of the graph and thus $W=\Omega(\varepsilon m^*(G))>2\varepsilon n$. 
		This finishes the proof.
	\end{proof}

	{
		Now we are ready to give the proof of Theorem \ref{thm:lower}.
		\thmlower*
		\begin{proof}
			The proof is similar to the proof of Theorem \ref{thm:deterministiclowerbound}.

			We let $G'$ be a graph with $n'$ vertices and edge weights in $[1,W]$, for $n'=\lfloor\frac{W}{3\varepsilon}\rfloor$. %
			By Theorem \ref{thm:firstlowerbound}, any data structure that dynamically maintains the weight $M$ of an MSF of a graph $G'$ with $W=\Omega(\varepsilon \cdot m^*(G'))$ and $m^*(G')=\Theta(n')$ within an additive error less than $W/2$ for any $W\geq 1$, or a multiplicative factor of  $(1+\varepsilon)$, with probability at least $1-\frac{1}{(n')^c}$, must perform $\Omega((\log n')^2/b)$ cell probes, where each cell has size $b$.
			
			Note that  $n'=\Theta(n^\alpha/\varepsilon)$ where $\alpha>0$ is a constant. Now assume that we have a data structure $\mathcal{A}$ that dynamically maintains the $(1+\varepsilon)$-approximation of the weight of an MSF of an $n$-vertex graph with probability at least $1-\frac{1}{n^{c'}}=1-\frac{1}{(n')^c}$ for some constant $c'>0$. (The case that $\mathcal{A}$ has an additive error less than $W/2$ can be analyzed similarly.)  We use $\mathcal{A}$ to the graph $\bar{G}$ which is consisted of $G'$ and $n-n'$ isolated vertices. Note that the weight of an MSF of $\bar{G}$ is the same as the weight of an MSF of $G'$. Thus $\mathcal{A}$ maintains a $(1+\varepsilon)$-approximation of the weight of an MSF of $G'$. Note that $m^*(G)=m^*(G')$, and that $W=\Omega(\varepsilon \cdot m^*(G'))=\Omega(\varepsilon \cdot m^*(G))$ and $m^*(G)=m^*(G')=\Theta(n')$.
			
			By setting $b=\Theta(\log n)$, we know that $\mathcal{A}$ must perform $\Omega((\log n')^2/b)=\Omega(\log^2 (W/\varepsilon)/b)=\Omega(\log n)$ cell probes, where each cell has size $O(\log n)$. %
		\end{proof}
	}

	%
	%
	\bibliographystyle{elsarticle-num}
	\bibliography{dynamic_constant}

	\appendix
	\renewcommand{\thesection}{\Alph{section}}

	\section{A Note on Dynamically Estimating the Number of CCs}\label{app:note_ncc}
	\paragraph{Estimating $\ncc(G)$ with an additive error $\varepsilon n^{O(1)}$} We note that similar to our previous algorithm (in Section~\ref{subsec:randomized}) for estimating $\ncc(G)$ with an additive error $\varepsilon \Thr(G)$, if we simply invoke the static algorithm from Lemma~\ref{lemma:bkm_ncc_nis} on the current graph $H=G$ with parameters $\nis(H)=n$ and $\ncc(H_{\nis})=\ncc(G)$ to obtain an estimator $\overline{cc}$  and re-compute the estimator every $\Theta(\varepsilon n)$ updates, then the corresponding algorithm always maintain an estimator for $\ncc(G)$ with an additive error $\varepsilon n$. That is, we have the following theorem.
	\begin{theorem}\label{thm:random_ncc_espn}
		Let $1>\varepsilon >0$ and $0<p<1$. There exists a fully dynamic algorithm that with probability at least $1-p$, maintains an estimator $\overline{cc}$ for the number $\ncc$ of CCs of a graph $G$ s.t.,  $|\overline{cc}-\ncc(G)|\leq \varepsilon\cdot n$. The worst-case time per update operation is $O(\max\{1,\frac{\log(1/\varepsilon)\log(1/p)}{\varepsilon^3 n}\})$.
	\end{theorem}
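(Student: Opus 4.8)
The plan is to obtain this theorem as essentially a special case of Theorem~\ref{thm:random_NCC} (equivalently, of the static-to-dynamic reduction underlying it) by fixing the auxiliary $T$-parameter to the constant value $n$ at every timestamp. Since $n\geq\nis(G)$ always holds and the constant sequence $T\equiv n$ trivially satisfies the requirement that consecutive $T$-parameters differ by at most $2$, both preconditions of Theorem~\ref{thm:random_NCC} are met. Invoking that theorem with $\varepsilon'=\varepsilon$ then yields an estimator $\overline{cc}$ with $|\overline{cc}-\ncc(G)|\leq\varepsilon' T=\varepsilon n$ and worst-case update time $O(\max\{1,\log(1/\varepsilon')\log(1/p)/(\varepsilon'^3 T^*)\})=O(\max\{1,\log(1/\varepsilon)\log(1/p)/(\varepsilon^3 n)\})$, since here $T^*=n$. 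Once $T$ equals the full vertex count the subgraph $H_{\nis}$ plays no role, so one may equally well invoke the static estimator of Lemma~\ref{lemma:bkm_ncc} directly on the whole graph.

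For a self-contained argument I would unfold the reduction: partition the update sequence into phases of $\Theta(\varepsilon n)$ updates each; at the start of each phase run the static algorithm of Lemma~\ref{lemma:bkm_ncc} on the current graph with error parameter $\varepsilon/c$ and failure probability $p$ to obtain an estimate within additive error $\varepsilon n/c$ of $\ncc$, and keep this estimate frozen until the end of the phase. Because each edge insertion or deletion changes $\ncc(G)$ by at most $1$, over a phase the frozen estimate can drift by at most an additional $\Theta(\varepsilon n)$, so the additive error stays $O(\varepsilon n)$; rescaling $\varepsilon$ by a constant makes it exactly $\varepsilon n$. The static computation costs $O(\varepsilon^{-2}\log(1/\varepsilon)\log(1/p))$ once per phase, i.e.\ amortized $O(\varepsilon^{-3}n^{-1}\log(1/\varepsilon)\log(1/p))$ per update; maintaining the adjacency list together with an array of the (static) vertex set for uniform vertex-sample queries costs an extra $O(1)$ per update, which yields the claimed amortized bound $O(\max\{1,\ldots\})$.

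The remaining point, and the one to be careful about, is converting the amortized bound into a worst-case bound: I would run the static algorithm ``in the background'', spreading its $O(\varepsilon^{-2}\log(1/\varepsilon)\log(1/p))$ steps evenly over the $\Theta(\varepsilon n)$ updates of a phase and feeding it a consistent snapshot of the graph (equivalently, running it on the graph as frozen at a phase boundary while tracking concurrent updates separately). One must verify that the extra $\Theta(\varepsilon n)$ updates elapsing during this background window inflate the additive error by only a constant factor, which again follows from the $\pm1$-per-update bound on $\ncc$; this is the standard global rebuilding technique already used for Theorem~\ref{thm:random_NCC}. Finally, fresh random bits are drawn at the start of each phase and no prior computation is reused, so the per-query failure probability is $p$ (a union bound over the $\poly(n)$ phases is absorbed into the constant when $p=1/\poly(n)$), and the worst-case analysis within a phase makes the algorithm robust against an adaptive adversary, exactly as in Theorem~\ref{thm:random_NCC}.
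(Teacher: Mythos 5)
Your proposal is correct and takes essentially the same approach as the paper: the paper likewise obtains the theorem by re-running the static estimator of Lemma~\ref{lemma:bkm_ncc_nis} every $\Theta(\varepsilon n)$ updates (i.e.\ by specializing the phase-based reduction behind Theorem~\ref{thm:random_NCC} with $\nis(H)$ replaced by $n$, which is what you do by fixing $T\equiv n$). You spell out the amortized-to-worst-case conversion via background computation; the paper leaves that implicit by appeal to the argument already given in Section~\ref{subsec:randomized}.
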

	
	The following is a direct corollary of the above theorem.
	\begin{corollary}\label{cor:random_NCC}
		Let $\varepsilon >0$ and let $c$ be any constant such that $c\geq 1$. There exists a fully dynamic algorithm that with probability at least $1-\frac{1}{n^c}$, maintains an estimator $\overline{cc}$ for the number $\ncc$ of CCs of a graph $G$ s.t.,  $|\overline{cc}-\ncc(G)|\leq \varepsilon n^{2/3}\log^{2/3}n$. The worst-case time per update operation is $O(\varepsilon^{-3})$. 
	\end{corollary}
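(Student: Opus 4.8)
The plan is to obtain this as a direct instantiation of Theorem~\ref{thm:random_ncc_espn}, choosing its two free parameters appropriately. I would set the accuracy parameter of that theorem to $\varepsilon' := \varepsilon\,\log^{2/3}n / n^{1/3}$ --- chosen precisely so that $\varepsilon' n = \varepsilon\, n^{2/3}\log^{2/3}n$ --- and the failure probability to $p := n^{-c}$. We may assume $n^{-2/3} < \varepsilon \le 1$: if $\varepsilon \le n^{-2/3}$ then $\varepsilon^{-3} \ge n^2 \ge m$, so after every update we can simply recompute $\ncc(G)$ exactly by a BFS traversal in $O(m) = O(\varepsilon^{-3})$ worst-case time, which trivially meets the stated error bound with probability $1$; and $\varepsilon > 1$ only relaxes the error requirement, so the $\varepsilon = 1$ instance suffices there. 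Under this assumption $0 < \varepsilon' < 1$ for $n$ large enough, so Theorem~\ref{thm:random_ncc_espn} applies.

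The error guarantee is then immediate: with probability at least $1 - p = 1 - n^{-c}$ the maintained estimator $\overline{cc}$ satisfies $|\overline{cc} - \ncc(G)| \le \varepsilon' n = \varepsilon\, n^{2/3}\log^{2/3}n$, which is exactly the claimed bound.

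For the running time, Theorem~\ref{thm:random_ncc_espn} yields a worst-case bound of $O(\max\{1,\ \log(1/\varepsilon')\log(1/p)/(\varepsilon'^3 n)\})$ per update. Here $\varepsilon'^3 n = \varepsilon^3\log^2 n$, and $\log(1/p) = c\log n = O(\log n)$, while, using $\varepsilon > n^{-2/3}$, $\log(1/\varepsilon') = \frac{1}{3}\log n - \log\varepsilon - \frac{2}{3}\log\log n = O(\log n)$. Substituting, the ratio becomes $O\big(\log n\cdot\log n /(\varepsilon^3\log^2 n)\big) = O(\varepsilon^{-3})$, and hence the overall worst-case update time is $O(\max\{1,\varepsilon^{-3}\}) = O(\varepsilon^{-3})$, as stated.

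There is essentially no hard step here: the whole point is bookkeeping of the parameter substitution. The one thing to watch is that the two logarithmic factors in the numerator of the running time of Theorem~\ref{thm:random_ncc_espn}, namely $\log(1/\varepsilon')$ and $\log(1/p)$, must each be kept at $O(\log n)$ so that they are absorbed by the $\log^2 n$ that appears in the denominator through $\varepsilon'^3 n$; this is exactly what forces the side condition that $\varepsilon'$ (equivalently $\varepsilon$) be at least an inverse polynomial in $n$, which is why the regime $\varepsilon \le n^{-2/3}$ is peeled off and handled by exact recomputation instead.
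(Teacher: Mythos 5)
Your proof is correct and matches the paper's argument essentially verbatim: both instantiate Theorem~\ref{thm:random_ncc_espn} with $\varepsilon' = \varepsilon\,n^{-1/3}\log^{2/3}n$ and $p = n^{-c}$, and both peel off the regime $\varepsilon \le n^{-2/3}$ by recomputing $\ncc(G)$ exactly via BFS in $O(n^2) = O(\varepsilon^{-3})$ time. Your bookkeeping of the logarithmic factors (and the extra remark that $\varepsilon > 1$ reduces to $\varepsilon = 1$) is slightly more explicit than the paper's, but it is the same proof.
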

	\begin{proof}
		If $\varepsilon < n^{-\frac23}$, then for each update, one can use the naive BFS algorithm to exactly compute $\ncc(G)$, which runs in time $O(n^2)=O(\varepsilon^{-3})$. If $\varepsilon \geq n^{-\frac23}$, we can apply Theorem~\ref{thm:random_ncc_espn} with parameters $p=\frac{1}{n^c}$, $\varepsilon' = \varepsilon n^{-1/3}\cdot\log^{2/3}n$, to obtain an $\overline{cc}$ for $\ncc(G)$ with an additive error $\varepsilon n^{2/3}\log^{2/3}n$. The corresponding dynamic algorithm has update time $O(\varepsilon^{-3}\cdot (1+\frac{\log(1/\varepsilon)}{\log n}))=O(\varepsilon^{-3})$.%
	\end{proof}
	
	\textbf{Remark:} We cannot expect to be able to get a constant-time algorithm for maintaining the number of connected components with an additive error of $1$ or a multiplicative error of $2$: Any such algorithm would be able to decide whether the graph is connected or not, contradicting the  $\Omega(\log n)$ lower bound for dynamically maintaining whether a graph is connected~\cite{PatrascuD06}.

		\end{document}